\theoremstyle{plain}
\newtheorem{thm}{Theorem}
\newtheorem{lemma}{Lemma}
\newtheorem{assumption}{Assumption}
\begin{document}

\title{The Empirical Likelihood Approach to Quantifying Uncertainty in Sample Average Approximation\footnote{A preliminary conference version of this work has appeared in \cite{lam2015quantifying}. Research of the first author was partially supported by the National Science Foundation under Grants CMMI-1400391/1542020 and CMMI-1436247/1523453. Research of the second author was partially supported by the National Science Foundation under Grant CAREER CMMI-1453934, and Air Force Office of Scientific Research under Grant YIP FA- 9550-14-1-0059.}}
\author{Henry Lam\thanks{Department of Industrial and Operations Engineering, University of Michigan, Ann Arbor, MI. Email: {\tt khlam@umich.edu}}
\and
Enlu Zhou\thanks{H. Milton Stewart School of Industrial and Systems Engineering, Georgia Institute of Technology, Atlanta, GA. Email: {\tt enlu.zhou@isye.gatech.edu}}}
\date{}
\maketitle

\begin{abstract}
We study the empirical likelihood approach to construct confidence intervals for the optimal value and the optimality gap of a given solution, henceforth quantify the statistical uncertainty of sample average approximation, for optimization problems with expected value objectives and constraints where the underlying probability distributions are observed via limited data. This approach relies on two distributionally robust optimization problems posited over the uncertain distribution, with a divergence-based uncertainty set that is suitably calibrated to provide asymptotic statistical guarantees. 
\\

\noindent\emph{Keywords:} empirical likelihood; sample average approximation; confidence interval; constrained optimization; stochastic program; statistical uncertainty

\end{abstract}

\section{Introduction}
We consider a stochastic optimization problem in the form
\begin{equation} \label{obj}
\min_{x\in\Theta}\{h(x):=E[H(x;\xi)]\},
\end{equation}
where $x=(x_1,\ldots,x_p)$ is a continuous decision variable in the deterministic feasible region $\Theta\subseteq\mathbb{R}^p$, and $\xi$ is a random vector on $\mathbb{R}^d$. We are interested in situations where the underlying probability distribution that controls the expectation $E[\cdot]$ is not fully known and can only be accessed via limited data $\xi_1,\ldots,\xi_n$. It is customary in this setting to work on an empirical counterpart of the problem, namely by solving the sample average approximation (SAA) (e.g., \cite{shapiro2014lectures}):
\begin{equation} \label{obj empirical}
\min_{x\in\Theta}\frac{1}{n}\sum_{i=1}^nH(x;\xi_i).
\end{equation}

We further consider problems with expected value constraints, in the form
\begin{equation} \label{obj constrained}
\begin{array}{ll}\min&h(x)=E[H(x;\xi)]\\
\text{subject to}&f_k(x)=E[F_k(x;\xi)]\leq0,\ k=1,\ldots,m\\
&g_k(x)\leq0,\ k=1,\ldots,s
\end{array}
\end{equation}
where $g_k(\cdot)$'s are deterministic functions. Thus \eqref{obj constrained} can include both stochastic and deterministic constraints. Again, under limited data $\xi_1,\ldots,\xi_n$, an SAA version of \eqref{obj constrained} is (e.g., \cite{wang2008sample})
\begin{equation} \label{obj constrained empirical}
\begin{array}{ll}\min&\frac{1}{n}\sum_{i=1}^nH(x;\xi_i)\\
\text{subject to}&\frac{1}{n}\sum_{i=1}^nF_k(x;\xi_i)\leq0,\ k=1,\ldots,m\\
&g_k(x)\leq0,\ k=1,\ldots,s
\end{array}
\end{equation}

Our premise is that beyond the $n$ observations, new samples are not easily accessible because of either a lack of data or limited computational capacity in running further Monte Carlo simulation. The optimal value and solution obtained from \eqref{obj empirical} and \eqref{obj constrained empirical} thus deviate from those under the genuine distribution in \eqref{obj} and \eqref{obj constrained}. Moreover, the error of the solution implies a non-zero optimality gap with the true optimal value, resulting in suboptimal decisions. Estimating these errors is important and has been studied over the years (e.g., \cite{kleywegt2002sample}, \cite{mak1999}, Chapter 5 in \cite{shapiro2014lectures}).



Our main contribution is to bring in a new approach to rigorously quantify the uncertainty in \eqref{obj empirical} and \eqref{obj constrained empirical} through constructing confidence intervals (CIs) for the true optimal value and the optimality gap for a given solution. The machinery underlying our framework uses the so-called empirical likelihood (EL) method in statistics, and culminates at a reformulation of the problem of finding the upper and lower bounds of a CI into solving two optimization problems that closely resemble distributionally robust optimization (DRO). The uncertainty set in the DRO is a divergence-based ball cast over an uncertain probability distribution, where the size of the ball is suitably calibrated so that it provides asymptotic guarantees for the coverage probability of the resulting CI.

We study the theory giving rise to such guarantees. We demonstrate through several numerical examples that our method compares favorably with some existing methods, such as bounds using the central limit theorem (CLT) and the delta method, in terms of finite-sample performance. In the remainder of this paper, Sections \ref{EL section} and \ref{optimality gap section} study the theory of our approach applied to the optimal value and the optimality gap, and Section \ref{sec:numerics} shows some numerical results and comparison with previous methods.

\section{The Empirical Likelihood Method for Constructing Confidence Bounds for Optimal Values}\label{EL section}
This section studies in detail the EL method in constructing CIs for the optimal values. Section \ref{sec:unconstrained} focuses on \eqref{obj} that only has deterministic constraints, and Section \ref{sec:constrained} generalizes to the stochastically constrained case \eqref{obj constrained}.

\subsection{Deterministically Constrained Optimization}\label{sec:unconstrained}
Let us first fix some notations. Given the set of i.i.d. data $\xi_1,\xi_2,\ldots,\xi_n$, we denote a probability vector over $\{\xi_1,\ldots,\xi_n\}$ as $w=(w_1,\ldots,w_n)\in\mathbb R^n$, where $\sum_{i=1}^nw_i=1$ and $w_i\geq0$ for all $i=1,\ldots,n$. We denote $\chi^2_{q,\beta}$ as the $1-\beta$ quantile of a $\chi^2$ distribution with degree of freedom $q$. We use ``$\Rightarrow$" to denote convergence in distribution, and ``a.s." to denote ``almost surely".

Our method utilizes the optimization problems
\begin{equation}
\begin{array}{ll}
\max/\min_w&\min_{x\in\Theta}\sum_{i=1}^nw_iH(x;\xi_i)\\
\text{subject to}&-2\sum_{i=1}^n\log(nw_i)\leq\chi^2_{p+1,\beta}\\
&\sum_{i=1}^nw_i=1\\
&w_i\geq0\text{\ for all\ }i=1,\ldots,n
\end{array}\label{EL opt}
\end{equation}
where ``$\max/\min$" denotes a pair of maximization and minimization. Note that the optimal value of the SAA problem \eqref{obj empirical} lies between those of \eqref{EL opt}.

The quantity $-(1/n)\sum_{i=1}^n\log(nw_i)$ can be interpreted as the Burg-entropy divergence (\cite{pardo2005statistical}, \cite{ben2013robust}) between the probability distributions represented by the weights $w$ and by the uniform weights $(1/n)_{i=1,\ldots,n}$ on the support $\{\xi_1,\ldots,\xi_n\}$. 
Thus, the first constraint in \eqref{EL opt} is a Burg-entropy divergence ball centered at the uniform weights, with radius $\chi^2_{p+1,\beta}/(2n)$. From the viewpoint of DRO (e.g., \cite{delage2010distributionally,ben2013robust,wiesemann2014distributionally}), the optimization problems in \eqref{EL opt} output the worst-case estimates of $\min_{x\in\Theta}\{h(x)=E[H(x;\xi)]\}$ when $E[\cdot]$ is uncertain and its underlying distribution is believed to lie inside the divergence ball. We should point out, however, that this DRO interpretation differs from those in the existing literature (e.g., \cite{bertsimas2016}), as our divergence ball (i.e. the ``uncertainty set" in the terminology of robust optimization) may have low coverage of the true distribution $P$. This can be seen particularly when $P$ is a continuous distribution, in which case the coverage of the divergence ball is zero because of the violation of the absolute continuity requirement needed in properly defining the divergence.


The EL method is a mechanism to endow statistical meaning to \eqref{EL opt}. In particular, it asserts that using the ball size $\chi^2_{p+1,\beta}/(2n)$ in \eqref{EL opt} gives rise to statistically valid $1-\beta$ confidence bounds for the optimal value of \eqref{obj} (despite that the ball may under-cover the true distribution). This method originates as a nonparametric analog of maximum likelihood estimation first proposed by \cite{owen1988empirical}. On the data set $\{\xi_1,\ldots,\xi_n\}$, we first define a ``nonparametric likelihood" $\prod_{i=1}^nw_i$, where $w_i$ is a probability weight applied to each datum. It is straightforward to see that the maximum value of $\prod_{i=1}^nw_i$, among all $w$ in the probability simplex, is $\prod_{i=1}^n(1/n)$. In fact, the same conclusion holds even if one allows putting weights outside the support of the data, which could only make the likelihood $\prod_{i=1}^nw_i$ smaller. In this sense, $\prod_{i=1}^n(1/n)$ can be viewed as a maximum likelihood in the nonparametric space. Correspondingly, we define the nonparametric likelihood ratio between the weights $w$ and the maximum likelihood weights as $\prod_{i=1}^nw_i/\prod_{i=1}^n(1/n)=\prod_{i=1}^n(nw_i)$. 


The key of the EL method is a nonparametric counterpart of the celebrated Wilks' Theorem \cite{wilks1938large} in parametric likelihood inference. The latter states that the ratio between the maximum likelihood and the true likelihood (the parametric likelihood ratio) converges to a $\chi^2$-distribution in a suitable logarithmic scale. To develop this analog, we first incorporate a target parameter of interest, i.e. the quantity whose statistical uncertainty is to be assessed (or to be ``estimated"). Say this parameter is $\theta\in\mathbb R^p$. 
Suppose the true parameter is known to satisfy the set of equations $E[t(\theta;\xi)]=0$ where $E[\cdot]$ is the expectation for the random object $\xi\in\mathbb R^d$, and $t(\theta;\xi),0\in\mathbb R^b$. We define the nonparametric profile likelihood ratio as
\begin{equation}
\mathcal R(\theta)=\max\left\{\prod_{i=1}^nnw_i:\sum_{i=1}^nw_it(\theta;\xi_i)=0,\ \sum_{i=1}^nw_i=1,\ w_i\geq0\text{\ for all\ }i=1,\ldots,n\right\}\label{profile1}
\end{equation}
where profiling refers to the categorization of all weights that respect the set of equations $E[t(\theta;\xi)]=0$.

With the above definitions, the crux is the empirical likelihood theorem (ELT):
\begin{thm}[Theorem 3.4 in \cite{owen2001empirical}]
Let $\xi_1,\ldots,\xi_n\in\mathbb R^d$ be i.i.d. data. Let $\theta_0\in\mathbb R^p$ be a value of the parameter that satisfies $E[t(\theta;\xi)]=0$, where $t(\theta;\xi),0\in\mathbb R^b$. Assume the covariance matrix $Var(t(\theta_0;\xi))$ is finite and has rank $q>0$. Then $-2\log\mathcal R(\theta_0)\Rightarrow\chi^2_q$, where $\mathcal R(\theta)$ is defined in \eqref{profile1}.\label{ELT estimating equations}
\end{thm}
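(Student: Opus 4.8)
The plan is to follow the classical empirical likelihood argument via Lagrangian duality. Write $Z_i := t(\theta_0;\xi_i)\in\R^b$, so that $E[Z_i]=0$ and $V:=Var(Z_i)$ has rank $q$. First I would record that, with probability tending to $1$, the origin lies in the interior of the convex hull of $\{Z_1,\dots,Z_n\}$, so that the constraint set in \eqref{profile1} is nonempty; on the complementary event (of vanishing probability) one adopts the convention $-2\log\mathcal R(\theta_0)=+\infty$, which does not affect convergence in distribution. On the feasible event, maximizing $\sum_{i=1}^n\log(nw_i)$ subject to $\sum_iw_i=1$, $\sum_iw_iZ_i=0$, $w_i\ge0$ is a concave program; introducing Lagrange multipliers and differentiating gives the familiar representation
$$w_i=\frac1n\,\frac1{1+\lambda^\top Z_i},\qquad i=1,\dots,n,$$
where $\lambda=\lambda(\theta_0)\in\R^b$ is characterized by the stationarity equation $\tfrac1n\sum_{i=1}^n Z_i/(1+\lambda^\top Z_i)=0$ (one checks that $\sum_iw_i=1$ is then automatic).

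The crux is an a priori rate bound $\|\lambda\|=O_p(n^{-1/2})$. The ingredients are: (i) the CLT, giving $\bar Z:=\tfrac1n\sum_iZ_i=O_p(n^{-1/2})$ and $\sqrt n\,\bar Z\Rightarrow N(0,V)$; (ii) the SLLN, giving $S:=\tfrac1n\sum_iZ_iZ_i^\top\to V$ a.s., so the relevant quadratic forms are eventually bounded below on $\mathrm{range}(V)$; and (iii) the moment lemma $\max_{1\le i\le n}\|Z_i\|=o_p(n^{1/2})$, a consequence of $E\|Z_i\|^2<\infty$. Writing $\lambda=\rho u$ with $\|u\|=1$, multiplying the stationarity equation by $u^\top$, and using $1/(1+\lambda^\top Z_i)\ge(1+\rho\max_i\|Z_i\|)^{-1}$ yields, after rearrangement, $\rho\,(u^\top S u-\|\bar Z\|\max_i\|Z_i\|)\le\|\bar Z\|$; since $\|\bar Z\|\max_i\|Z_i\|=O_p(n^{-1/2})o_p(n^{1/2})=o_p(1)$ while $u^\top Su$ is eventually bounded away from $0$, we get $\rho=O_p(\|\bar Z\|)=O_p(n^{-1/2})$, hence also $\max_i|\lambda^\top Z_i|=o_p(1)$, which legitimizes the Taylor expansions below.

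With this rate in hand, I would expand the stationarity equation using $1/(1+x)=1-x+x^2/(1+x)$ to get $0=\bar Z-S\lambda+r_n$, where $\|r_n\|\le\|\lambda\|^2\cdot\tfrac1n\sum_i\|Z_i\|^3\cdot O_p(1)=O_p(n^{-1})\,o_p(n^{1/2})\,O_p(1)=o_p(n^{-1/2})$; hence $\lambda=S^{+}\bar Z+o_p(n^{-1/2})$, with $S^{+}$ the Moore–Penrose pseudoinverse. Then, from $-2\log\mathcal R(\theta_0)=2\sum_{i=1}^n\log(1+\lambda^\top Z_i)$ and $\log(1+x)=x-\tfrac12x^2+O(|x|^3)$, I substitute $\sum_i\lambda^\top Z_i=n\lambda^\top\bar Z$ and $\sum_i(\lambda^\top Z_i)^2=n\lambda^\top S\lambda$, use $\bar Z=S\lambda+o_p(n^{-1/2})$ to replace $n\lambda^\top\bar Z$ by $n\lambda^\top S\lambda+o_p(1)$, bound the cubic remainder by $\|\lambda\|^3\sum_i\|Z_i\|^3=o_p(1)$, and arrive at
$$-2\log\mathcal R(\theta_0)=n\lambda^\top S\lambda+o_p(1)=n\,\bar Z^\top S^{+}\bar Z+o_p(1)=(\sqrt n\,\bar Z)^\top S^{+}(\sqrt n\,\bar Z)+o_p(1).$$
Combining $\sqrt n\,\bar Z\Rightarrow N(0,V)$ with $S\to V$ a.s. via Slutsky's theorem, this quadratic form converges to $N^\top V^{+}N$ with $N\sim N(0,V)$, which is distributed as $\chi^2_q$, completing the proof.

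The main obstacle is the rank-deficient case $q<b$, where $V$ is singular, $\bar Z$ lies only asymptotically in $\mathrm{range}(V)$, and the pseudoinverse steps above must be justified with error control. The standard remedy is to rotate coordinates so that $Z_i$ splits into a $q$-dimensional component with nonsingular covariance and a component that is a.s. zero (rank $q$ forces $Z_i$ into a fixed $q$-dimensional subspace a.s.), reducing everything to the nonsingular case in $\R^q$; the rest is bookkeeping of $O_p/o_p$ terms. Secondary technical points needing care are the feasibility event and the uniform smallness $\max_i|\lambda^\top Z_i|=o_p(1)$, both of which rest on the finite-second-moment hypothesis.
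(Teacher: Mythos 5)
This theorem is not proved in the paper at all---it is imported verbatim as Theorem 3.4 of Owen (2001) and used as a black box---so the only meaningful benchmark is Owen's own argument, which your proposal reproduces faithfully: the dual representation $w_i=1/(n(1+\lambda^\top Z_i))$, the rate bound $\|\lambda\|=O_p(n^{-1/2})$ obtained from $\max_{1\le i\le n}\|Z_i\|=o_p(n^{1/2})$ (the same Lemma 11.2 of Owen restated in the paper's Appendix), the Taylor expansion to the quadratic form $n\bar Z^\top S^{+}\bar Z$, and the reduction of the rank-deficient case $q<b$ to a nonsingular problem in $\mathbb R^q$ using the fact that $Z_i$ lies a.s.\ in the $q$-dimensional range of $V$. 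Your argument is correct, including the two delicate points you explicitly flag---feasibility of the origin in the (relative interior of the) convex hull, and the cubic remainder, which under only second moments must be bounded as $\|\lambda\|^3\max_i\|Z_i\|\sum_i\|Z_i\|^2=o_p(1)$ exactly as you do.
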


The quantity $-2\log\mathcal R(\theta)$ is defined as $\infty$ if the optimization in \eqref{profile1} is infeasible.

We now explain how \eqref{EL opt} provides confidence bounds for optimization problem \eqref{obj}. We make the following assumptions:
\begin{assumption}
\begin{enumerate}
\item $h(x)$ is differentiable in $x$ with $\nabla_xh(x)=E[\nabla_xH(x;\xi)]$ for all $x\in\Theta$.\label{differentiability}
\item $x^*\in\text{argmin}_{x\in\Theta}h(x)$ if and only if $\nabla_xh(x^*)=0$. Moreover, this relation is \emph{distributionally stable}, meaning that $\tilde x^*\in\text{argmin}_{x\in\Theta}\tilde h(x)$ if and only if $\nabla_x\tilde h(\tilde x^*)=0$ for any $\tilde h(x)=\tilde E[H(\tilde x;\xi)]$ that has the expectation $\tilde E[\cdot]$ generated under an arbitrary distribution $\tilde P$ such that
$$\sup_{x\in\Theta}|\tilde h(x)-h(x)|<\epsilon$$
for small enough $\epsilon>0$.
\label{first order}
\item There exists an $x^*\in\text{argmin}_{x\in\Theta}h(x)$ such that the covariance matrix of the random vector $(\nabla_x H(x^*;\xi),H(x^*;\xi))\in\mathbb R^{p+1}$ is finite and has positive rank.\label{rank}
\item
$\frac{1}{n}\sum_{i=1}^nH(x;\xi_i)\to h(x)$ uniformly over $x\in\Theta$ a.s..\label{uniform convergence}
\item $E\left[\sup_{x\in\Theta}H(x;\xi)^2\right]<\infty$\label{second moment}
\end{enumerate}\label{EL assumptions}
\end{assumption}

Assuming the existence of $\nabla_xH(x;\xi)$ a.s., the interchangeability of derivative and expectation in Assumption \ref{EL assumptions}.\ref{differentiability} can generally be justified by the pathwise Lipschitz continuity condition
$$|H((x_1,\ldots,x_{j-1},u,x_{j+1},\ldots,x_p);\xi)-H((x_1,\ldots,x_{j-1},v,x_{j+1},\ldots,x_p);\xi)|\leq M_j|u-v|\text{\ \ a.s.}$$
for any $u,v$ in a nonrandom neighborhood around the point $x_j$ to be differentiated and $M_j$ measurable with $EM_j<\infty$ (e.g., \cite{asmussen2007stochastic}). Another sufficient condition is that $H(x;\xi)$ is a.s. continuous and piecewise differentiable in $x_j$ and $\sup_{u\in D}|(\partial/\partial x_j)H((x_1,\ldots,x_{j-1},u,x_{j+1},\ldots,x_p);\xi)|$ is integrable where $D$ is a neighborhood around $x_j$ \cite{Glasserman:1988:PCD:318123.318245}. Assumption \ref{EL assumptions}.\ref{first order} states that the first order condition for optimality is both sufficient and necessary. Assumptions \ref{EL assumptions}.\ref{first order} and \ref{EL assumptions}.\ref{uniform convergence} together ensure that this first order condition is unchanged when the true distribution is replaced by a (weighted) empirical version as the sample size gets large. Assumption \ref{EL assumptions}.\ref{second moment} is a technical condition required to bound the error between the empirical distribution and its weighted version within the divergence ball. Assumption \ref{EL assumptions}.\ref{rank} is used to invoke Theorem \ref{ELT estimating equations}. Note that $x^*$ is not necessarily unique.

As our subsequent development will reveal, both the necessity and the sufficiency of the first order condition in Assumption \ref{EL assumptions}.\ref{first order} are required; in particular, we need the necessity of $\nabla_xh(x^*)=0$ for $x^*\in\text{argmin}_{x\in\Theta}h(x)$ and the sufficiency of $\nabla_x\tilde h(\tilde x^*)=0$ for $\tilde x^*\in\text{argmin}_{x\in\Theta}\tilde h(x)$ in order for our argument on statistical guarantee to go through. 
Assumptions \ref{EL assumptions}.\ref{first order}, \ref{EL assumptions}.\ref{uniform convergence} and \ref{EL assumptions}.\ref{second moment} can be replaced by a single condition
\begin{assumption}
$x^*\in\text{argmin}_{x\in\Theta}h(x)$  if and only if $\nabla_xh(x^*)=0$, and $\tilde x^*\in\text{argmin}_{x\in\Theta}\sum_{i=1}^nw_iH(x;\xi_i)$ if and only if $\sum_{i=1}^nw_i\nabla_xH(\tilde x^*;\xi_i)=0$ for any support set $\{\xi_1,\ldots,\xi_n\}\subset\Theta$ and arbitrary probability vector $w$.\label{first order alternate}
\end{assumption}
Assumption \ref{first order alternate} is satisfied by, for instance, $H(\cdot;\xi)$ that is coersive and convex for any $\xi$ and $\Theta=\mathbb R^p$.


We have the following statistical guarantee:
\begin{thm}
Suppose $\xi_1,\ldots,\xi_n\in\mathbb R^d$ are i.i.d. data. Let $z^*$ be the optimal value of \eqref{obj}, and $\overline z$ and $\underline z$ be the maximum and minimum values of \eqref{EL opt} respectively. Then, under Assumption \ref{EL assumptions}, we have
$$\liminf_{n\to\infty}P\left(z^*\in[\underline z,\overline z]\right)\geq1-\beta.$$\label{EL thm}
\end{thm}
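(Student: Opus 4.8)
The plan is to connect the two optimization problems in \eqref{EL opt} to the empirical likelihood theorem (Theorem \ref{ELT estimating equations}) by choosing the right estimating equation, and then to exploit Assumption \ref{EL assumptions}.\ref{first order} to replace the hard-to-control ``$\min_x\sum_i w_iH(x;\xi_i)$'' objective by a first-order condition. Concretely, fix an $x^*\in\text{argmin}_{x\in\Theta}h(x)$ as in Assumption \ref{EL assumptions}.\ref{rank}, and consider the estimating function $t(\theta;\xi)=(\nabla_xH(x^*;\xi),H(x^*;\xi)-\theta)\in\mathbb R^{p+1}$ with true parameter value $\theta_0=h(x^*)=z^*$; by Assumption \ref{EL assumptions}.\ref{differentiability} the first $p$ coordinates have mean zero, and the last has mean $z^*$, so $E[t(\theta_0;\xi)]=0$. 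Assumption \ref{EL assumptions}.\ref{rank} gives the finite-positive-rank covariance needed, so Theorem \ref{ELT estimating equations} yields $-2\log\mathcal R(z^*)\Rightarrow\chi^2_q$ for some $q\le p+1$, hence $P(-2\log\mathcal R(z^*)\le\chi^2_{p+1,\beta})\to$ something $\ge1-\beta$ (using $\chi^2_{p+1,\beta}\ge\chi^2_{q,\beta}$).

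The next step is to show that the event $\{-2\log\mathcal R(z^*)\le\chi^2_{p+1,\beta}\}$ is (asymptotically) contained in the event $\{z^*\in[\underline z,\overline z]\}$. On the former event there exists a feasible $w$ for \eqref{profile1} at $\theta=z^*$, i.e.\ satisfying $-2\sum_i\log(nw_i)\le\chi^2_{p+1,\beta}$, $\sum_iw_i=1$, $w_i\ge0$, together with $\sum_iw_i\nabla_xH(x^*;\xi_i)=0$ and $\sum_iw_iH(x^*;\xi_i)=z^*$. Such a $w$ is automatically feasible for \eqref{EL opt}. Now invoke Assumption \ref{EL assumptions}.\ref{first order}: since $\sum_iw_i\nabla_xH(x^*;\xi_i)=0$, and since Assumptions \ref{EL assumptions}.\ref{uniform convergence}, \ref{EL assumptions}.\ref{second moment} guarantee that the weighted empirical objective $\tilde h(x)=\sum_iw_iH(x;\xi_i)$ stays uniformly within $\epsilon$ of $h(x)$ with probability tending to one (the divergence bound on $w$ plus the second-moment condition controls $\sup_x|\sum_iw_iH(x;\xi_i)-\frac1n\sum_iH(x;\xi_i)|$, and Assumption \ref{EL assumptions}.\ref{uniform convergence} controls $\sup_x|\frac1n\sum_iH(x;\xi_i)-h(x)|$), the distributional-stability clause forces $x^*\in\text{argmin}_{x\in\Theta}\sum_iw_iH(x;\xi_i)$. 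Therefore $\min_{x\in\Theta}\sum_iw_iH(x;\xi_i)=\sum_iw_iH(x^*;\xi_i)=z^*$, so this $w$ witnesses $\underline z\le z^*\le\overline z$. Combining, $P(z^*\in[\underline z,\overline z])\ge P(-2\log\mathcal R(z^*)\le\chi^2_{p+1,\beta})-o(1)$, and letting $n\to\infty$ gives the claim.

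The main obstacle, and where the technical work concentrates, is the uniform approximation step: showing that every $w$ inside the Burg-entropy ball of radius $\chi^2_{p+1,\beta}/(2n)$ produces a weighted empirical measure whose integral of $H(x;\cdot)$ is uniformly (in $x\in\Theta$) close to $h(x)$ with high probability. The weights $nw_i$ are not uniformly bounded — some can be as large as order $n/\log n$ — so one cannot argue by a crude sup bound; instead one needs a Cauchy–Schwarz-type estimate, $|\sum_i(w_i-\tfrac1n)H(x;\xi_i)|\le(\sum_i(nw_i-1)^2/n)^{1/2}(\frac1n\sum_iH(x;\xi_i)^2)^{1/2}$ (times $1/\sqrt n$), bounding the first factor via the divergence constraint (a quadratic lower bound on Burg entropy gives $\sum_i(nw_i-1)^2=O(\log n)$ type control, careful since Burg entropy is only locally quadratic) and the second factor via Assumption \ref{EL assumptions}.\ref{second moment} and the SLLN uniformly in $x$. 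This is exactly the role flagged for Assumption \ref{EL assumptions}.\ref{second moment} in the discussion, and getting the uniformity right — rather than pointwise — is the crux. A secondary subtlety is handling non-uniqueness of $x^*$: the argument only needs one $x^*$ satisfying Assumption \ref{EL assumptions}.\ref{rank}, and Assumption \ref{EL assumptions}.\ref{first order} ensures the first-order characterization transfers to it, so uniqueness is not actually required.
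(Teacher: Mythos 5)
Your proposal is correct and follows the same strategy as the paper's own proof: fix a minimizer $x^*$ as in Assumption \ref{EL assumptions}.\ref{rank}, apply Theorem \ref{ELT estimating equations} to the joint estimating equations $E[\nabla_xH(x^*;\xi)]=0$ and $E[H(x^*;\xi)-z^*]=0$, extract from the event $\{-2\log\mathcal R(z^*)\leq\chi^2_{p+1,\beta}\}$ a weight vector $w$ feasible for \eqref{EL opt}, and use uniform closeness of $h^w$ to $h$ plus the distributional-stability clause of Assumption \ref{EL assumptions}.\ref{first order} to upgrade $\sum_iw_i\nabla_xH(x^*;\xi_i)=0$ to $x^*\in\text{argmin}_{x\in\Theta}\sum_iw_iH(x;\xi_i)$, so that this $w$ attains objective value exactly $z^*$ in \eqref{EL opt}. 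Your direct ``witness'' phrasing is a streamlined version of the paper's chain \eqref{opt interim}--\eqref{opt}, and invoking $\chi^2_{p+1,\beta}\geq\chi^2_{q,\beta}$ up front replaces the paper's relaxation $\chi^2_{r,\beta}\leq\chi^2_{p+1,\beta}$ at the end; these differences are cosmetic.

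The one place you genuinely diverge is the step you correctly identify as the crux, namely $\sup_{x\in\Theta,\,w}|h^w(x)-h(x)|\to0$ a.s.\ uniformly over the divergence ball. The paper bounds $|\int H\,d(P^w-\hat P)|$ by $\sup_{x,i}|H(x;\xi_i)|\cdot d_{TV}(P^w,\hat P)$, getting $o(n^{1/2})$ for the first factor from Owen's lemma (restated in the Appendix) under Assumption \ref{EL assumptions}.\ref{second moment}, and $O(n^{-1/2})$ for the second from Pinsker's inequality. Your Cauchy--Schwarz route also works and is in fact slightly sharper: writing $u_i=nw_i-1$ with $\sum_iu_i=0$, the inequality $\log(1+u)\leq u-u^2/(2(1+\max(u,0)))$ shows the Burg constraint forces $\sum_i(nw_i-1)^2=O(1)$ uniformly over the ball (not merely $O(\log n)$ as you guess), so $\bigl(\tfrac1n\sum_i(nw_i-1)^2\bigr)^{1/2}=O(n^{-1/2})$ while $\bigl(\tfrac1n\sum_i\sup_{x}H(x;\xi_i)^2\bigr)^{1/2}=O(1)$ a.s.\ by the SLLN and Assumption \ref{EL assumptions}.\ref{second moment}; the product gives an $O(n^{-1/2})$ rate where the paper only asserts $o(1)$. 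Two small slips, neither fatal: the parenthetical extra factor of $1/\sqrt n$ in your Cauchy--Schwarz display is spurious (the bound is exactly the product of the two normalized root-mean-squares, and the decay must come entirely from the weight factor, which is why the $O(1)$ bound on $\sum_i(nw_i-1)^2$ matters), and the mean-zero property of $\nabla_xH(x^*;\xi)$ requires the necessity direction of Assumption \ref{EL assumptions}.\ref{first order} in addition to the interchange in Assumption \ref{EL assumptions}.\ref{differentiability}.
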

\begin{proof}
By Assumption \ref{EL assumptions}.\ref{rank}, there exists an $x^*\in\text{argmin}_{x\in\Theta}h(x)$ such that the covariance matrix of the random vector $(\nabla_x H(x^*;\xi),H(x^*;\xi))$ has a positive rank, call it $r$. Also, by Assumptions \ref{EL assumptions}.\ref{differentiability} and \ref{EL assumptions}.\ref{first order}, $x^*$ satisfies $E[\nabla_x H(x^*;\xi)]=0$. 

We define the nonparametric profile likelihood ratio as
\begin{equation}
\mathcal R(x,z)=\max\left\{\prod_{i=1}^nnw_i:\sum_{i=1}^nw_i\nabla_x H(x;\xi_i)=0,\ \sum_{i=1}^nw_iH(x;\xi_i)=z,\ \sum_{i=1}^nw_i=1,\ w_i\geq0\text{\ for all\ }i=1,\ldots,n\right\}.\label{profile}
\end{equation}
Let $z^*=\min_{x\in\Theta}h(x)=h(x^*)$. From Theorem \ref{ELT estimating equations}, the nonparametric profile likelihood ratio \eqref{profile} satisfies $-2\log\mathcal R(x^*,z^*)\Rightarrow\chi^2_r$ as $n\to\infty$. 
This implies $P(-2\log\mathcal R(x^*,z^*)\leq\chi^2_{r,\beta})\to1-\beta$.

The rest of the proof focuses on the event $-2\log\mathcal R(x^*,z^*)\leq\chi^2_{r,\beta}$. Write
\begin{eqnarray}
&&-2\log\mathcal R(x^*,z^*)\notag\\
&=&\min\Bigg\{-2\sum_{i=1}^n\log(nw_i):\sum_{i=1}^nw_i\nabla_x H(x^*;\xi_i)=0,\ \sum_{i=1}^nw_iH(x^*;\xi_i)=z^*,\ \sum_{i=1}^nw_i=1,\ w_i\geq0{}\notag\\
&&{}\text{\ for all\ }i=1,\ldots,n\Bigg\}\label{profile updated}
\end{eqnarray}
We argue that $-2\log\mathcal R(x^*,z^*)\leq\chi^2_{r,\beta}$ implies the existence of a probability vector $w$ such that \begin{equation}
\sum_{i=1}^nw_i\nabla_x H(x^*;\xi_i)=0,\ \sum_{i=1}^nw_iH(x^*;\xi_i)=z^*,\ -2\sum_{i=1}^n\log(nw_i)\leq\chi^2_{r,\beta}\label{properties}
\end{equation}
Notice that $-2\sum_{i=1}^n\log(nw_i)=\infty$ if $w_i=0$ for any $i$. Hence it suffices to replace, in \eqref{profile updated}, $w_i\geq0$ with $w_i\geq\epsilon$ for all $i$, for some small enough $\epsilon>0$. In this modified, compact, feasible set, $-2\sum_{i=1}^n\log(nw_i)$ is bounded and hence must possess an optimal solution $w$, which is a probability vector that satisfies \eqref{properties}.

This further implies that $z^*$ is bounded from above and below by the optimization problems
\begin{equation}
\begin{array}{ll}
\max/\min_{w}&\sum_{i=1}^nw_iH(x^*;\xi_i)\\
\text{subject to}&\sum_{i=1}^nw_i\nabla_x H(x^*;\xi_i)=0\\
&-2\sum_{i=1}^n\log(nw_i)\leq\chi^2_{r,\beta}\\
&\sum_{i=1}^nw_i=1\\
&w_i\geq0\text{\ for all\ }i=1,\ldots,n
\end{array}\label{opt interim}
\end{equation}
We argue that as $n\to\infty$, \eqref{opt interim} is equivalent to
\begin{equation}
\begin{array}{ll}
\max/\min_{w}&\sum_{i=1}^nw_iH(x^*;\xi_i)\\
\text{subject to}&w\in\{(w_1,\ldots,w_n):x^*\in\text{argmin}_{x\in\Theta}\sum_{i=1}^nw_iH(x;\xi_i)\}\\
&-2\sum_{i=1}^n\log(nw_i)\leq\chi^2_{r,\beta}\\
&\sum_{i=1}^nw_i=1\\
&w_i\geq0\text{\ for all\ }i=1,\ldots,n
\end{array}\label{opt interim1}
\end{equation}
eventually (i.e. with probability 1). Note that the first constraint in \eqref{opt interim1} states that the probability vector $w$ must be chosen such that $x^*$, a minimizer of $h(x)$ picked at the beginning of this proof, also minimizes $\sum_{i=1}^nw_iH(x;\xi_i)$.

To develop the argument for the asymptotic equivalence, let us denote $P^w$ as the distribution represented by the probability weights $w$ on the support $\{\xi_1,\ldots,\xi_n\}$. Denote $E^w[\cdot]$ as the associated expectation and $h^w(x)=E^w[H(x;\xi)]$. We will show that
\begin{equation}
\sup_{x\in\Theta,w\in\mathcal W_r}|h^w(x)-h(x)|\to0\text{\ \ a.s.}\label{opt interim2}
\end{equation}
where
\begin{equation}
\mathcal W_r=\left\{(w_1,\ldots,w_n)\in\mathbb R^n:-2\sum_{i=1}^n\log(nw_i)\leq\chi^2_{r,\beta},\ \sum_{i=1}^nw_i=1,\ w_i\geq0\text{\ for all\ }i=1,\ldots,n\right\}\label{constraint set}
\end{equation}
Assumption \ref{EL assumptions}.\ref{first order} then implies that with probability 1, for sufficiently large $n$, $\sum_{i=1}^nw_i\nabla_x H(x^*;\xi_i)=0$ if and only if $x^*\in\text{argmin}_{x\in\Theta}\sum_{i=1}^nw_iH(x;\xi_i)$ for any $w\in\mathcal W_r$, leading to the equivalence.

We now show \eqref{opt interim2}. Consider
\begin{eqnarray}
\sup_{x\in\Theta,w\in\mathcal W_r}|h^w(x)-h(x)|&=&\sup_{x\in\Theta,w\in\mathcal W_r}\left|\int H(x;\xi)d(P^w-P)(\xi)\right|\notag\\
&\leq&\sup_{x\in\Theta,w\in\mathcal W_r}\left|\int H(x;\xi)d(P^w-\hat P)(\xi)\right|+\sup_{x\in\Theta,w\in\mathcal W_r}\left|\int H(x;\xi)d(\hat P-P)(\xi)\right|{}\notag\\
&&{}\text{\ \ where $\hat P$ denotes the empirical distribution generated from $\{\xi_1,\ldots,\xi_n\}$}\notag\\
&\leq&\sup_{x\in\Lambda,1\leq i\leq n,w\in\mathcal W_r}|H(x;\xi_i)|d_{TV}(P^w,\hat P)+\sup_{x\in\Theta}\left|\int H(x;\xi)d(\hat P-P)(\xi)\right|{}\label{interim new3}\\
&&{}\text{\ \ where $d_{TV}$ denotes the total variation distance}\notag
\end{eqnarray}
Now by Lemma 11.5 in \cite{owen2001empirical} (restated in the Appendix) and Assumption \ref{EL assumptions}.\ref{second moment}, we have
\begin{equation}
\sup_{x\in\Theta,1\leq i\leq n}|H(x;\xi_i)|=\max_{1\leq i\leq n}\sup_{x\in\Theta}|H(x;\xi_i)|=o(n^{1/2})\text{\ \ a.s.}\label{interim new1}
\end{equation}
On the other hand, by Pinsker's inequality, for any $w\in\mathcal W_r$,
\begin{equation}
d_{TV}(P^w,\hat P)\leq\sqrt{\frac{d_{KL}(P^w,\hat P)}{2}}=\sqrt{\frac{-\sum_{i=1}^n\log(nw_i)}{2n}}\leq\sqrt{\frac{\chi^2_r}{4n}}\label{interim new2}
\end{equation}
where $d_{KL}$ denotes the Kullback-Leibler divergence. Combining \eqref{interim new1} and \eqref{interim new2}, the first term in \eqref{interim new3} goes to 0 a.s.. The second term in \eqref{interim new3} converges to 0 a.s. by Assumption \ref{EL assumptions}.\ref{uniform convergence}. Hence $\sup_{x\in\Theta}|h^w(x)-h(x)|\to0$ a.s.. Therefore \eqref{opt interim} is equivalent to \eqref{opt interim1} eventually as $n\to\infty$.

Consider \eqref{opt interim1}. With the first constraint, the objective function must be equal to $\min_{x\in\Theta}\sum_{i=1}^nw_iH(x;\xi_i)$. Thus \eqref{opt interim1} is equivalent to
\begin{equation}
\begin{array}{ll}
\max/\min_w&\min_{x\in\Theta}\sum_{i=1}^nw_iH(x;\xi_i)\\
\text{subject to}&w\in\{(w_1,\ldots,w_n):x^*\in\text{argmin}_{x\in\Theta}\sum_{i=1}^nw_iH(x;\xi_i)\}\\
&-2\sum_{i=1}^n\log(nw_i)\leq\chi^2_{r,\beta}\\
&\sum_{i=1}^nw_i=1\\
&w_i\geq0\text{\ for all\ }i=1,\ldots,n
\end{array}\label{opt}
\end{equation}
Let $\overline v$ and $\underline v$ be the maximum and minimum values of \eqref{opt}. Note that $r\leq p+1$ since $r$ is the rank of a $\mathbb R^{(p+1)\times(p+1)}$ matrix. This implies $\chi^2_{r,\beta}\leq\chi^2_{p+1,\beta}$. Together with a relaxation by removing the first constraint in \eqref{opt}, we have $\underline v\geq\underline z$ and $\overline v\leq\overline z$ where $\overline z$ and $\underline z$ are the maximum and minimum values of \eqref{EL opt}. From this we conclude that
$$\liminf_{n\to\infty}P(\underline z\leq z^*\leq\overline z)\geq\liminf_{n\to\infty}P(\underline v\leq z^*\leq\overline v)\geq\liminf_{n\to\infty}P(-2\log\mathcal R(x^*,z^*)\leq\chi^2_{r,\beta})=1-\beta$$

%
\end{proof}

Note that we have obtained bounds by relaxing the constraints in \eqref{opt}, and the degree of freedom in the $\chi^2$-distribution may not be optimally chosen. Nevertheless, our numerical examples show that, at least for small $p$, the EL method provides reasonably tight CIs. There exists techniques (e.g., bootstrap calibration or Bartlett correction; \cite{owen1988empirical,diciccio1991empirical}) that can improve the coverage of the EL method in estimation problems. Investigation of these techniques in the optimization context is delegated to future work.

Note that the equivalence of \eqref{opt interim} and \eqref{opt interim1} holds for all $n$ if Assumption \ref{first order alternate} replaces Assumptions \ref{EL assumptions}.\ref{first order}, \ref{EL assumptions}.\ref{uniform convergence} and \ref{EL assumptions}.\ref{second moment}. 

\subsection{Stochastically Constrained Optimization}\label{sec:constrained}
We generalize the EL method to the stochastically constrained problem \eqref{obj constrained}. In this setting, we construct CI via the following optimization problems
\begin{equation}
\begin{array}{ll}
\max/\min_w&\left\{\begin{array}{ll}\min_x&\sum_{i=1}^nw_iH(x;\xi_i)\\
\text{subject to}&\sum_{i=1}^nw_iF_k(x;\xi_i)\leq0,\ k=1,\ldots,m\\
&g_k(x)\leq0,\ k=1,\ldots,s
\end{array}\right\}\\
\text{subject to}&-2\sum_{i=1}^n\log(nw_i)\leq\chi^2_{p+m+1,\beta}\\
&\sum_{i=1}^nw_i=1\\
&w_i\geq0\text{\ for all\ }i=1,\ldots,n
\end{array}\label{opt constrained}
\end{equation}
While resembling \eqref{EL opt}, we note that the degree of freedom in the $\chi^2$-distribution is now $p+m+1$, which includes the number of stochastic constraints compared to \eqref{EL opt}.

For convenience, we denote
$$\Lambda=\left\{x\in\mathbb R^p:g_k(x)\leq0,\ k=1,\ldots,s\right\}$$
as the set of $x$ satisfying the deterministic constraints in \eqref{obj constrained}.

We make the following assumptions in parallel to Assumption \ref{EL assumptions}:
\begin{assumption}
We assume:
\begin{enumerate}
\item $h(x)=E[H(x;\xi)]$, $f_k(x)=E[F_k(x;\xi)],\ k=1,\ldots,m$ and $g_k(x),\ k=1,\ldots,s$ are all differentiable in $x\in\Lambda$, and
$$\nabla_xh(x)=E[\nabla_xH(x;\xi)],\ \nabla_xf_k(x)=E[\nabla_xF_k(x;\xi)]$$
\item Let $S^*$ be the set of all optimal solutions for \eqref{obj constrained}. $x^*\in S^*$ if and only if $x^*$ satisfies the KKT condition, where the active set of the KKT condition (i.e. equalities) is unique among all $x^*\in S^*$ and is sufficient for determining $S^*$. This relation is \emph{distributionally stable}, meaning that $\tilde x^*\in\tilde S^*$, where $\tilde S^*$ is the set of optimal solutions for
\begin{equation} \label{obj constrained perturbed}
\begin{array}{ll}\min&\tilde h(x)\\
\text{subject to}&\tilde f_k(x)\leq0,\ k=1,\ldots,m\\
&\tilde g_k(x)\leq0,\ k=1,\ldots,s
\end{array}
\end{equation}
if and only if $\tilde x^*$ satisfies the corresponding KKT condition, where $\tilde h(x)=\tilde E[H(x;\xi)]$ and $\tilde f_k(x)=\tilde E[F_k(x;\xi)]$, with $\tilde E$ denoting the expectation under an arbitrary distribution $\tilde P$ such that
$$\sup_{x\in\Lambda}|\tilde h(x)-h(x)|<\epsilon$$
$$\sup_{x\in\Lambda}|\tilde f_k(x)-f_k(x)|<\epsilon\text{\ \ for all\ }k=1,\ldots,m$$
for small enough $\epsilon>0$. Moreover, for any such $\epsilon>0$, the active set of the KKT condition at any $\tilde x^*\in\tilde S^*$ for \eqref{obj constrained perturbed} is the same as that at any $x^*\in S^*$ for \eqref{obj constrained} and is sufficient for determining $\tilde S^*$.
%
\item
  There exists an optimal solution $x^*$ for \eqref{obj constrained}, with associated Lagrange multipliers for the stochastic constraints in \eqref{obj constrained} given by $\lambda^*=(\lambda_1^*,\ldots,\lambda_m^*)$, such that the covariance matrix of the variables $H(x^*;\xi)$, $\frac{\partial}{\partial x_j}H(x^*;\xi)+\sum_{k=1}^m\lambda_k^*\frac{\partial}{\partial x_j}F_k(x^*;\xi)$, and $F_k(x^*;\xi)$, for all indices $j$ and $k$ corresponding to the active set of the KKT condition, is finite and has positive rank.
\item
$$\frac{1}{n}\sum_{i=1}^nH(x;\xi_i)\to h(x)\text{\ \ and\ \ }\frac{1}{n}\sum_{i=1}^nF_k(x;\xi_i)\to f_k(x),\ \ k=1,\ldots,m$$
uniformly over $x\in\Lambda$ a.s..
\item $E\left[\sup_{x\in\Theta}H(x;\xi)^2\right]<\infty$ and $E\left[\sup_{x\in\Theta}F_k(x;\xi)^2\right]<\infty$ for $k=1,\ldots,m$.
\end{enumerate}\label{assumptions constrained}
\end{assumption}

Denote $\nu^*=(\nu_1^*,\ldots,\nu_s^*)$ as the Lagrange multiplier for the deterministic constraints in \eqref{obj constrained}. In Assumptions \ref{assumptions constrained}.2 and \ref{assumptions constrained}.3 above, the active set of the KKT condition satisfied by $(x^*,\lambda^*,\nu^*)$ is in the form
$$\frac{\partial}{\partial x_j}h(x^*)+\sum_{k=1}^m\lambda_k^*\frac{\partial}{\partial x_j}f_k(x^*)+\sum_{k=1}^s\nu_k^*\frac{\partial}{\partial x_j}g_k(x^*)=0,\ j\in\mathcal A_1^*\equiv\{1,\ldots,p\}$$
  $$f_k(x^*)=0,\ k\in\mathcal A_2^*\subset\{1,\ldots,m\}$$
  $$g_k(x^*)=0,\ k\in\mathcal A_3^*\subset\{1,\ldots,s\}$$
  $$\lambda_k^*=0\ k\in\{1,\ldots,m\}\setminus\mathcal A_2^*$$
  $$\nu_k^*=0,\ \ k\in\{1,\ldots,s\}\setminus\mathcal A_3^*$$
where $\mathcal A_1^*$, $\mathcal A_2^*$ and $\mathcal A_3^*$ denote the sets of indices that correspond to the equalities in the condition, which are unique among any optimal solutions of \eqref{obj constrained} by Assumption \ref{assumptions constrained}.2. The $j$ and $k$ described in Assumption \ref{assumptions constrained}.3 refer to the indices in $\mathcal A_1^*$ and $\mathcal A_2^*$. Assumption \ref{assumptions constrained}.2 further enforces the sets $\mathcal A_1^*$, $\mathcal A_2^*$ and $\mathcal A_3^*$ to remain as the active sets under a perturbation to $\tilde P$ described therein, and the equalities indexed via these sets are enough to determine $S^*$ and $\tilde S^*$. Assumptions \ref{assumptions constrained}.2 and \ref{assumptions constrained}.3 generalize Assumptions \ref{EL assumptions}.\ref{first order} and \ref{EL assumptions}.\ref{rank} from the first order condition to the KKT condition. Similar to Section \ref{sec:unconstrained}, we require the necessity of the KKT and the active set conditions regarding \eqref{obj constrained} and the sufficiency regarding \eqref{obj constrained perturbed} for our development to go through. Constraint qualification for the validity of the KKT condition is implicitly assumed in Assumption \ref{assumptions constrained}.2.


We have the following result:
\begin{thm}
Suppose $\xi_1,\ldots,\xi_n$ are i.i.d. data. Under Assumption \ref{assumptions constrained}, we have
$$\liminf_{n\to\infty}P(z^*\in[\underline z,\overline z])\geq1-\beta$$
where $z^*$ is the optimal value of \eqref{obj constrained}, and $\underline z$ and $\overline z$ are the minimum and maximum values of \eqref{opt constrained}.
\end{thm}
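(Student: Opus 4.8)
The plan is to follow the structure of the proof of Theorem \ref{EL thm}, replacing the first-order optimality condition by the KKT condition and carrying along the Lagrange multipliers of the stochastic constraints. By Assumption \ref{assumptions constrained}.3, fix an optimal solution $x^*$ of \eqref{obj constrained} with stochastic-constraint multipliers $\lambda^*=(\lambda_1^*,\ldots,\lambda_m^*)$ and (deterministic) multipliers $\nu^*$, such that the covariance matrix of the random vector with components $H(x^*;\xi)$, $\frac{\partial}{\partial x_j}H(x^*;\xi)+\sum_{k=1}^m\lambda_k^*\frac{\partial}{\partial x_j}F_k(x^*;\xi)$ for $j\in\mathcal A_1^*$, and $F_k(x^*;\xi)$ for $k\in\mathcal A_2^*$, is finite with positive rank $r$. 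This vector lives in $\mathbb R^{1+p+|\mathcal A_2^*|}$, so $r\le p+m+1$. By the KKT condition, complementary slackness, and Assumption \ref{assumptions constrained}.1, each of these components has mean zero (the gradient-of-Lagrangian component has expectation $\frac{\partial}{\partial x_j}h(x^*)+\sum_k\lambda_k^*\frac{\partial}{\partial x_j}f_k(x^*)=-\sum_k\nu_k^*\frac{\partial}{\partial x_j}g_k(x^*)$, a constant absorbed into the estimating equation, and $E[F_k(x^*;\xi)]=f_k(x^*)=0$ for $k\in\mathcal A_2^*$).

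Next, define the nonparametric profile likelihood ratio $\mathcal R(x,\lambda,z)$ as the maximum of $\prod_{i=1}^n nw_i$ over probability vectors $w$ subject to $\sum_{i=1}^nw_i\big(\nabla_xH(x;\xi_i)+\sum_{k=1}^m\lambda_k\nabla_xF_k(x;\xi_i)\big)+\sum_{k=1}^s\nu_k^*\nabla_xg_k(x)=0$ restricted to coordinates $j\in\mathcal A_1^*$, together with $\sum_{i=1}^nw_iF_k(x;\xi_i)=0$ for $k\in\mathcal A_2^*$ and $\sum_{i=1}^nw_iH(x;\xi_i)=z$. Applying Theorem \ref{ELT estimating equations} at $(x^*,\lambda^*,z^*)$, where $z^*$ is the optimal value of \eqref{obj constrained}, gives $-2\log\mathcal R(x^*,\lambda^*,z^*)\Rightarrow\chi^2_r$, hence $P(-2\log\mathcal R(x^*,\lambda^*,z^*)\le\chi^2_{r,\beta})\to1-\beta$. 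On the event $\{-2\log\mathcal R(x^*,\lambda^*,z^*)\le\chi^2_{r,\beta}\}$, the same compactification argument as in Theorem \ref{EL thm} (replacing $w_i\ge0$ by $w_i\ge\epsilon$) produces a probability vector $w$ satisfying the above estimating equations together with $-2\sum_{i=1}^n\log(nw_i)\le\chi^2_{r,\beta}$, so $z^*$ lies between the maximum and minimum over such $w$ of $\sum_{i=1}^nw_iH(x^*;\xi_i)$.

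The crux is then to show that, eventually a.s., this optimization is equivalent to the one in which the estimating equations are replaced by the requirement that $x^*$ minimizes the weighted constrained SAA $\min_x\{\sum_{i=1}^nw_iH(x;\xi_i):\sum_{i=1}^nw_iF_k(x;\xi_i)\le0,\ g_k(x)\le0\}$. Exactly as in the unconstrained proof, I would establish $\sup_{x\in\Lambda,\,w\in\mathcal W_r}|h^w(x)-h(x)|\to0$ and $\sup_{x\in\Lambda,\,w\in\mathcal W_r}|f_k^w(x)-f_k(x)|\to0$ a.s.\ for $k=1,\ldots,m$, splitting $P^w-P=(P^w-\hat P)+(\hat P-P)$, bounding the first piece by $\sup|H(x;\xi_i)|\cdot d_{TV}(P^w,\hat P)$ (and likewise for $F_k$) via Owen's Lemma 11.5 and Assumption \ref{assumptions constrained}.5 together with Pinsker's inequality, and the second piece by Assumption \ref{assumptions constrained}.4. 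Invoking the distributional stability in Assumption \ref{assumptions constrained}.2 then yields, for large $n$ and all $w\in\mathcal W_r$, that the active set of the weighted SAA at $x^*$ is still $(\mathcal A_1^*,\mathcal A_2^*,\mathcal A_3^*)$, that the inactive constraints $\sum_i w_iF_k(x^*;\xi_i)<0$, $k\notin\mathcal A_2^*$, hold automatically by uniform convergence, and that the estimating equations (the active-set KKT conditions at $x^*$ with multipliers $\lambda^*,\nu^*$) are equivalent to $x^*\in\text{argmin}$ of the weighted constrained SAA. After this substitution the objective becomes the inner problem of \eqref{opt constrained}; since $\chi^2_{r,\beta}\le\chi^2_{p+m+1,\beta}$ and dropping the ``$x^*$ is optimal'' constraint is a relaxation, the extremal values $\underline v\ge\underline z$, $\overline v\le\overline z$ of \eqref{opt constrained} sandwich $z^*$ on that event, giving $\liminf_{n\to\infty}P(z^*\in[\underline z,\overline z])\ge\liminf_{n\to\infty}P(-2\log\mathcal R(x^*,\lambda^*,z^*)\le\chi^2_{r,\beta})=1-\beta$.

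The main obstacle is the asymptotic-equivalence step in the third paragraph. Unlike the deterministically constrained case, the feasible region of the weighted SAA now moves with $w$, so one must control the perturbations of the objective and of every stochastic constraint simultaneously and, crucially, ensure that the active set does not change and that the KKT multipliers can be taken stable — which is what Assumption \ref{assumptions constrained}.2 is designed to supply, but it requires careful bookkeeping of the strictly inactive constraints and of the fixed deterministic multipliers $\nu^*$ (these enter the estimating equations but, since $g_k$ is deterministic, inject no randomness and do not affect the rank computation or the application of Theorem \ref{ELT estimating equations}).
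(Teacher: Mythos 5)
Your proposal is correct and follows essentially the same route as the paper's proof: the profile likelihood ratio built from the active-set KKT estimating equations at $(x^*,\lambda^*,\nu^*,z^*)$, the application of Theorem \ref{ELT estimating equations} with the deterministic terms contributing no randomness, the uniform convergence of $h^w$ and $f_k^w$ over $\mathcal W_r$ via the Pinsker/total-variation decomposition, the invocation of the distributional stability in Assumption \ref{assumptions constrained}.2 to pass to the ``$x^*$ optimal for the weighted SAA'' formulation, and the relaxation with $r\le 1+p+m$. The only cosmetic difference is that you fix $\nu^*$ in the profile likelihood rather than carrying $\nu$ as an argument, which changes nothing.
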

%
%

\begin{proof}
Consider the nonparametric profile likelihood ratio
\begin{equation}
\mathcal R(x,\lambda,\nu,z)=\max\left\{\prod_{i=1}^nnw_i:\begin{array}{l}\sum_{i=1}^nw_iH(x;\xi_i)=z\\
\sum_{i=1}^nw_i\left(\frac{\partial}{\partial x_j}H(x;\xi_i)+\sum_{k=1}^m\lambda_k\frac{\partial}{\partial x_j}F_k(x;\xi_i)\right)+\sum_{k=1}^s\nu_k\frac{\partial}{\partial x_j}g_k(x)=0,\ j\in\mathcal A_1^*\\
\sum_{i=1}^nw_iF_k(x;\xi_i)=0,\ k\in\mathcal A_2^*\\
g_k(x)=0,\ k\in\mathcal A_3^*\\
\lambda_k=0\ k\in\{1,\ldots,m\}\setminus\mathcal A_2^*\\
\nu_k=0,\ \ k\in\{1,\ldots,s\}\setminus\mathcal A_3^*\\
\sum_{i=1}^nw_i=1\\
w_i\geq0\text{\ for all\ }i=1,\ldots,n
\end{array}\right\}\label{profile constrained}
\end{equation}
Let $x^*$ be an optimal solution for \eqref{obj constrained} satisfying Assumption \ref{assumptions constrained}.3, and $\lambda^*=(\lambda_1^*,\ldots,\lambda_m^*),\nu^*=(\nu_1^*,\ldots,\nu_s^*)$ be its associated Lagrange multipliers. 
  By Assumption \ref{assumptions constrained}.3, the covariance of the random vector concatenated by
  \begin{eqnarray*}
  &&H(x^*;\xi)\\
  &&\frac{\partial}{\partial x_j}H(x^*;\xi)+\sum_{k=1}^m\lambda_k^*\frac{\partial}{\partial x_j}F_k(x^*;\xi)+\sum_{k=1}^s\nu_k^*\frac{\partial}{\partial x_j}g_k(x^*)\text{\ \ for\ }j\in\mathcal A_1^*\\
  &&F_k(x^*;\xi)\text{\ \ for\ }k\in\mathcal A_2^*\\
  \end{eqnarray*}
  has rank $r$ for some $r>0$. Let $z^*$ be the optimal value of \eqref{obj constrained} equal to $h(x^*)$. Since the other active KKT conditions are deterministic, Theorem \ref{ELT estimating equations} implies that $-2\log\mathcal R(x^*,\lambda^*,\nu^*,z^*)\Rightarrow\chi^2_r$, 
which further implies $P(-2\log\mathcal R(x^*,\lambda^*,\nu^*,z^*)\leq\chi^2_{r,\beta})\to1-\beta$.


Similar to the proof of Theorem \ref{EL thm}, $-2\log\mathcal R(x^*,\lambda^*,\nu^*,z^*)\leq\chi^2_{r,\beta}$ implies the existence of a $w$ that satisfies $-2\sum_{i=1}^n\log(nw_i)\leq\chi^2_{r,\beta}$ and all constraints in \eqref{profile constrained} evaluated at $x^*,\lambda^*,\nu^*,z^*$. This in turn implies that $z^*$ is bounded by
\begin{equation}
\begin{array}{ll}
\max/\min_w&\sum_{i=1}^nw_iH(x^*;\xi_i)\\
\text{subject to}&\sum_{i=1}^nw_i\left(\frac{\partial}{\partial x_j}H(x^*;\xi_i)+\sum_{k=1}^m\lambda_k^*\frac{\partial}{\partial x_j}F_k(x^*;\xi_i)\right)+\sum_{k=1}^s\nu_k^*\frac{\partial}{\partial x_j}g_k(x^*)=0,\ j\in\mathcal A_1^*\\
&\sum_{i=1}^nw_iF_k(x^*;\xi_i)=0,\ k\in\mathcal A_2^*\\
&g_k(x^*)=0,\ k\in\mathcal A_3^*\\
&\lambda_k^*=0\ k\in\{1,\ldots,m\}\setminus\mathcal A_2^*\\
&\nu_k^*=0,\ \ k\in\{1,\ldots,s\}\setminus\mathcal A_3^*\\
&-2\sum_{i=1}^n\log(nw_i)\leq\chi^2_{r,\beta}\\
&\sum_{i=1}^nw_i=1\\
&w_i\geq0\text{\ for all\ }i=1,\ldots,n
\end{array}\label{interim opt constrained}
\end{equation}
Using the same argument as in the proof of Theorem \ref{EL thm}, we obtain from Assumptions \ref{assumptions constrained}.4 and \ref{assumptions constrained}.5 that
$$\sup_{x\in\Lambda,w\in\mathcal W_r}|h^w(x)-h(x)|\to0\text{\ \ a.s.}$$
$$\sup_{x\in\Lambda,w\in\mathcal W_r}|f_k^w(x)-f_k(x)|\to0\text{\ \ a.s. for all\ }k=1,\ldots,m$$
where $\mathcal W_r$ is defined in \eqref{constraint set}, and $h^w(x)=E^w[H(x;\xi)]$, $f_k^w(x)=E^w[F_k(x;\xi)]$ with $E^w$ denoting the expectation with respect to $P^w$, the probability distribution represented by the weights $w$ on the support $\{\xi_1,\ldots,\xi_n\}$. 
Thus, by Assumption \ref{assumptions constrained}.2, the set of active KKT conditions for an optimal solution of the weighted sample problem
$$\begin{array}{ll}\min_x&\sum_{i=1}^nw_iH(x;\xi_i)\\
\text{subject to}&\sum_{i=1}^nw_iF_k(x;\xi_i)\leq0,\ k=1,\ldots,m\\
&g_k(x)\leq0,\ k=1,\ldots,s
\end{array}$$
for any $w\in\mathcal W_r$ is identical to that for $x^*$ for \eqref{obj constrained} eventually as $n\to\infty$, and Assumption \ref{assumptions constrained}.2 further implies that \eqref{interim opt constrained} is equivalent to
\begin{equation}
\begin{array}{ll}
\max/\min_w&\sum_{i=1}^nw_iH(x^*;\xi_i)\\
\text{subject to}&w\in\left\{(w_1,\ldots,w_n):x^*\in\left\{\begin{array}{ll}\text{argmin}_x&\sum_{i=1}^nw_iH(x;\xi_i)\\
\text{subject to}&\sum_{i=1}^nw_iF_k(x;\xi_i)\leq0,\ k=1,\ldots,m\\
&g_k(x)\leq0,\ k=1,\ldots,s
\end{array}\right\}\right\}\\
&-2\sum_{i=1}^n\log(nw_i)\leq\chi^2_{r,\beta}\\
&\sum_{i=1}^nw_i=1\\
&w_i\geq0\text{\ for all\ }i=1,\ldots,n
\end{array}\label{interim opt new1 constrained}
\end{equation}
eventually as $n\to\infty$. With the first constraint, the objective function in \eqref{interim opt new1 constrained} must be equal to $\min_x\{\sum_{i=1}^nw_iH(x;\xi_i):\sum_{i=1}^nw_iF_k(x;\xi_i)\leq0,k=1,\ldots,m,\ g_k(x)\leq0,k=1,\ldots,s\}$. Note that
\begin{equation}
r\leq1+|\mathcal A_1^*|+|\mathcal A_2^*|\leq1+p+m\label{df}
\end{equation}
where $|\cdot|$ denotes cardinality. This implies that $\chi^2_{r,\beta}\leq\chi^2_{p+m+1,\beta}$. Thus, together with a relaxation of the first constraint in \eqref{interim opt new1 constrained}, the same argument as in the proof of Theorem \ref{EL thm} stipulates that the maximum and minimum values of \eqref{interim opt new1 constrained} are bounded from above and below respectively by those of \eqref{opt constrained} and concludes the theorem.

\end{proof}

Note that, much like the proof of Theorem \ref{EL thm}, we have relaxed constraints and placed a conservative bound on the degree of freedom of the $\chi^2$-distribution in \eqref{df}, which could potentially be improved with more refined analysis.

\section{The Empirical Likelihood Method for Constructing Confidence Bounds for Optimality Gaps}\label{optimality gap section}
We study the construction of CI for the optimality gap of a given solution using the EL method. Suppose $\hat x$ is obtained from some procedure independently of the data $\xi_1,\ldots,\xi_n$. The optimality gap of $\hat x$ is given by $\mathcal G(\hat x)=h(\hat x)-z^*$ where $z^*$ is the optimal value of either \eqref{obj} or \eqref{obj constrained}. We will show how we can apply the results in Section \ref{EL section} to find the CI for $\mathcal G(\hat x)$.

Consider the optimization problems
\begin{equation}
\begin{array}{ll}
\max/\min_w&\max_{x\in\Theta}\sum_{i=1}^nw_i[H(\hat x;\xi_i)-H(x;\xi_i)]\\
\text{subject to}&-2\sum_{i=1}^n\log(nw_i)\leq\chi^2_{p+1,\beta}\\
&\sum_{i=1}^nw_i=1\\
&w_i\geq0\text{\ for all\ }i=1,\ldots,n
\end{array}\label{EL optimality gap}
\end{equation}

We have the following guarantee in using \eqref{EL optimality gap} to construct the CI for $\mathcal G(\hat x)$ for \eqref{obj}:
\begin{thm}
Suppose $\xi_1,\ldots,\xi_n\in\mathbb R^d$ are i.i.d. data independent of a given solution $\hat x$. Let $\mathcal G(\hat x)$ be the optimality gap of $\hat x$ for \eqref{obj}, and $\overline z$ and $\underline z$ be the maximum and minimum values of the programs in \eqref{EL optimality gap} respectively. Suppose Assumption \ref{EL assumptions} holds except that in Condition \ref{rank}, we consider the covariance matrix of $(\nabla_x H(x^*;\xi),H(x^*;\xi)-H(\hat x;\xi))$ instead. We have
\begin{equation}
\liminf_{n\to\infty}P\left(\mathcal G(\hat x)\in[\underline z,\overline z]\right)\geq1-\beta.\label{guarantee optimality gap}
\end{equation}
\label{EL thm optimality gap}
\end{thm}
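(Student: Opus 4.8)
The plan is to transcribe the proof of Theorem~\ref{EL thm}, taking the optimality gap $\mathcal G(\hat x)=h(\hat x)-z^*$ as the target parameter and profiling around the pair $(x^*,\mathcal G(\hat x))$, where $x^*\in\mathrm{argmin}_{x\in\Theta}h(x)$. The key observation is that, since $h(\hat x)-h(x^*)=E[H(\hat x;\xi)-H(x^*;\xi)]$, the $\mathbb R^{p+1}$-valued function $t((x,g);\xi):=\big(\nabla_xH(x;\xi),\,H(\hat x;\xi)-H(x;\xi)-g\big)$ has zero expectation at $(x,g)=(x^*,\mathcal G(\hat x))$ (using $E[\nabla_xH(x^*;\xi)]=0$ from Assumptions~\ref{EL assumptions}.\ref{differentiability}--\ref{EL assumptions}.\ref{first order}), and its covariance is exactly the covariance of $(\nabla_xH(x^*;\xi),H(x^*;\xi)-H(\hat x;\xi))$, which has some positive rank $r\le p+1$ by the modified Condition~\ref{rank}. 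It is essential here that $\hat x$ is independent of $\xi_1,\dots,\xi_n$, so that $t$ is a legitimate estimating function with $\hat x$ entering only as a fixed constant.

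Following this, I would introduce the nonparametric profile likelihood ratio
\begin{equation*}
\mathcal R(x,g)=\max\left\{\prod_{i=1}^nnw_i:\ \sum_{i=1}^nw_i\nabla_xH(x;\xi_i)=0,\ \ \sum_{i=1}^nw_i[H(\hat x;\xi_i)-H(x;\xi_i)]=g,\ \ \sum_{i=1}^nw_i=1,\ \ w_i\ge0\ \forall i\right\}
\end{equation*}
(with $-2\log\mathcal R(x,g):=\infty$ when infeasible), and apply Theorem~\ref{ELT estimating equations} to get $-2\log\mathcal R(x^*,\mathcal G(\hat x))\Rightarrow\chi^2_r$, hence $P\big(-2\log\mathcal R(x^*,\mathcal G(\hat x))\le\chi^2_{r,\beta}\big)\to1-\beta$. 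Restricting to this event, the same compactness argument as in the proof of Theorem~\ref{EL thm} (replacing $w_i\ge0$ by $w_i\ge\epsilon$) produces a probability vector $w$ with $-2\sum_i\log(nw_i)\le\chi^2_{r,\beta}$, $\sum_iw_i\nabla_xH(x^*;\xi_i)=0$, and $\sum_iw_i[H(\hat x;\xi_i)-H(x^*;\xi_i)]=\mathcal G(\hat x)$; consequently $\mathcal G(\hat x)$ lies between the maximum and minimum of $\sum_iw_i[H(\hat x;\xi_i)-H(x^*;\xi_i)]$ over $w\in\mathcal W_r$ (defined in \eqref{constraint set}) subject to $\sum_iw_i\nabla_xH(x^*;\xi_i)=0$.

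I would then reuse the uniform-convergence estimate \eqref{opt interim2}, namely $\sup_{x\in\Theta,w\in\mathcal W_r}|h^w(x)-h(x)|\to0$ a.s., which carries over verbatim (it uses only Assumptions~\ref{EL assumptions}.\ref{uniform convergence} and \ref{EL assumptions}.\ref{second moment}, and $|H(\hat x;\xi)|\le\sup_{x\in\Theta}|H(x;\xi)|$ needs nothing new), and invoke Assumption~\ref{EL assumptions}.\ref{first order} to conclude that, eventually a.s., $\sum_iw_i\nabla_xH(x^*;\xi_i)=0$ is equivalent to $x^*\in\mathrm{argmin}_{x\in\Theta}\sum_iw_iH(x;\xi_i)$ for every $w\in\mathcal W_r$. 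Under this constraint $\sum_iw_iH(x^*;\xi_i)=\min_{x\in\Theta}\sum_iw_iH(x;\xi_i)$, so the objective becomes $\max_{x\in\Theta}\sum_iw_i[H(\hat x;\xi_i)-H(x;\xi_i)]$, which is precisely the inner program of \eqref{EL optimality gap}. Relaxing — dropping the argmin constraint and enlarging the ball from $\chi^2_{r,\beta}/(2n)$ to $\chi^2_{p+1,\beta}/(2n)$, legitimate since $r\le p+1$ — only enlarges the feasible set for $w$, so the maximum of \eqref{EL optimality gap} is $\ge$ and its minimum $\le$ the corresponding intermediate values. Chaining the inequalities gives $\liminf_{n\to\infty}P\big(\mathcal G(\hat x)\in[\underline z,\overline z]\big)\ge\liminf_{n\to\infty}P\big(-2\log\mathcal R(x^*,\mathcal G(\hat x))\le\chi^2_{r,\beta}\big)=1-\beta$.

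The main obstacle is the middle bookkeeping: verifying that, once $x^*\in\mathrm{argmin}_{x\in\Theta}\sum_iw_iH(x;\xi_i)$ is imposed, the shifted objective $\sum_iw_i[H(\hat x;\xi_i)-H(x^*;\xi_i)]$ indeed collapses to the inner maximization appearing in \eqref{EL optimality gap}, and confirming that the uniform-convergence and first-order-condition machinery of Theorem~\ref{EL thm} transfers with no modification — which it does, since the only change relative to that proof is replacing $H(x^*;\xi)$ by $H(x^*;\xi)-H(\hat x;\xi)$ in the estimating equations, and $\hat x$ plays no role in the constraint set $\mathcal W_r$.
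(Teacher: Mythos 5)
Your argument is correct and is in substance the same as the paper's: the paper packages the identical computation as a reduction, checking that $\bar H(x;\xi)=H(x;\xi)-H(\hat x;\xi)$ satisfies Assumption \ref{EL assumptions} (with the modified Condition \ref{rank}) and then invoking Theorem \ref{EL thm} on $\min_{x\in\Theta}E[\bar H(x;\xi)]$, whose optimal value is $-\mathcal G(\hat x)$, whereas you inline that proof with the shifted estimating function $\bigl(\nabla_xH(x;\xi),\,H(\hat x;\xi)-H(x;\xi)-g\bigr)$. Both routes rest on the same two facts --- the ELT applied at $(x^*,\mathcal G(\hat x))$ with covariance rank $r\le p+1$, and the eventual a.s.\ equivalence of $\sum_iw_i\nabla_xH(x^*;\xi_i)=0$ with $x^*\in\mathrm{argmin}_{x\in\Theta}\sum_iw_iH(x;\xi_i)$ over $w\in\mathcal W_r$ --- so there is nothing to add.
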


\begin{proof}
Let $\bar H(x;\xi)=H(x;\xi)-H(\hat x;\xi)$, and $\bar h(x)=E[\bar H(x;\xi)]=h(x)-h(\hat x)$. We verify that Assumption \ref{EL assumptions}, with the change that the covariance matrix of $(\nabla_x H(x^*;\xi),H(x^*;\xi)-H(\hat x;\xi))$ is considered instead in Condition \ref{rank}, implies that $\bar h$ and $\bar H$ satisfies Assumption \ref{EL assumptions} too with $h$ and $H$ replaced by $\bar h$ and $\bar H$ and $\epsilon$ replaced by $2\epsilon$.

\noindent\underline{Condition 1:} We have $\nabla_x\bar h(x)=\nabla_x(h(x)-h(\hat x))=\nabla_xh(x)=E[\nabla_xH(x;\xi)]=E[\nabla_x(H(x;\xi)-H(\hat x;\xi))]=E[\nabla_x\bar H(x;\xi)]$.

\noindent\underline{Condition 2:} We have $x^*\in\text{argmin}_{x\in\Theta}\bar h(x)\Leftrightarrow x^*\in\text{argmin}_{x\in\Theta}h(x)\Leftrightarrow\nabla_xh(x^*)=0\Leftrightarrow\nabla_x\bar h(x^*)=0$. Similarly, $x^*\in\text{argmin}_{x\in\Theta}\bar{\tilde h}(x)\Leftrightarrow x^*\in\text{argmin}_{x\in\Theta}\tilde h(x)\Leftrightarrow\nabla_x\tilde h(x^*)=0\Leftrightarrow\nabla_x\bar{\tilde h}(x^*)=0$ for any $\bar{\tilde h}(x)=\tilde h(x)-\tilde h(\hat x)$ that satisfies
$$\sup_{x\in\Theta}|\bar{\tilde h}(x)-\bar h(x)|\leq\sup_{x\in\Theta}|\tilde h(x)-h(x)|+|\tilde h(\hat x)-h(\hat x)|<2\epsilon$$

\noindent\underline{Condition 3:} By our modification of this condition we have the covariance of $(\nabla_x\bar H(x^*;\xi),\bar H(x^*;\xi))=(\nabla_xH(x^*;\xi),H(x^*;\xi)-H(\hat x;\xi))$ finite and having a positive rank.

\noindent\underline{Condition 4:} It is straightforward to show that $\frac{1}{n}\sum_{i=1}^n\bar H(x;\xi_i)=\frac{1}{n}\sum_{i=1}^nH(x;\xi_i)-\frac{1}{n}\sum_{i=1}^nH(\hat x;\xi_i)\to0$ a.s. uniformly over $x\in\Theta$.

\noindent\underline{Condition 5:} We have $E[\sup_{x\in\Theta}\bar H(x;\xi)^2]=E[\sup_{x\in\Theta}(H(x;\xi)-H(\hat x;\xi))^2]\leq 4(E[\sup_{x\in\Theta}H(x;\xi)^2]+E[H(\hat x;\xi)^2])<\infty$.

We have therefore verified our claim. Using Theorem \ref{EL thm}, we get that
$$\liminf_{n\to\infty}P(\underline v\leq\bar h(x^*)\leq\overline v)\geq1-\beta$$
where $\overline v$ and $\underline v$ are the maximum and minimum values of
\begin{equation}
\begin{array}{ll}
\max/\min_w&\min_{x\in\Theta}\sum_{i=1}^nw_i\bar H(x;\xi_i)\\
\text{subject to}&-2\sum_{i=1}^n\log(nw_i)\leq\chi^2_{p+1,\beta}\\
&\sum_{i=1}^nw_i=1\\
&w_i\geq0\text{\ for all\ }i=1,\ldots,n
\end{array}\label{EL optimality gap1}
\end{equation}
Noting that $\mathcal G(x^*)=-\bar h(x^*)$, we get \eqref{guarantee optimality gap} immediately.
\end{proof}

For the optimality gap of the stochastically constrained problem \eqref{obj constrained}, we use the following optimization problems
\begin{equation}
\begin{array}{ll}
\max/\min_w&\left\{\begin{array}{ll}\max_x&\sum_{i=1}^nw_i[H(\hat x;\xi_i)-H(x;\xi_i)]\\
\text{subject to}&\sum_{i=1}^nw_iF_k(x;\xi_i)\leq0,\ k=1,\ldots,m\\
&g_k(x)\leq0,\ k=1,\ldots,s
\end{array}\right\}\\
\text{subject to}&-2\sum_{i=1}^n\log(nw_i)\leq\chi^2_{p+m+1,\beta}\\
&\sum_{i=1}^nw_i=1\\
&w_i\geq0\text{\ for all\ }i=1,\ldots,n
\end{array}\label{optimality gap constrained}
\end{equation}

We have the following guarantee in parallel to the deterministically constrained case:
\begin{thm}
Suppose $\xi_1,\ldots,\xi_n\in\mathbb R^d$ are i.i.d. data independent of a given solution $\hat x$. Let $\mathcal G(\hat x)$ be the optimality gap of $\hat x$ for \eqref{obj constrained}, and $\overline z$ and $\underline z$ be the maximum and minimum values of the programs \eqref{optimality gap constrained} respectively. Suppose Assumption \ref{assumptions constrained} holds except that in Condition 3, we consider the covariance matrix of $H(x^*;\xi)-H(\hat x;\xi)$, $\frac{\partial}{\partial x_j}H(x^*;\xi)+\sum_{k=1}^m\lambda_k^*\frac{\partial}{\partial x_j}F_k(x^*;\xi)$, and $F_k(x^*;\xi)$, for all indices $j$ and $k$ corresponding to the active set of the KKT condition for \eqref{obj constrained}. We have
$$\liminf_{n\to\infty}P\left(\mathcal G(\hat x)\in[\underline z,\overline z]\right)\geq1-\beta.$$\label{EL thm optimality gap constrained}
\end{thm}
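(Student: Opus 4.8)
The plan is to mirror the proof of Theorem~\ref{EL thm optimality gap}, reducing the claim to the optimal-value guarantee for \eqref{obj constrained} established in Section~\ref{sec:constrained}. Set $\bar H(x;\xi)=H(x;\xi)-H(\hat x;\xi)$ and $\bar h(x)=h(x)-h(\hat x)$, and leave the stochastic constraint functions $F_k$ and the deterministic constraints $g_k$ untouched. Since $\bar h$ differs from $h$ only by the term $h(\hat x)$, which is constant in $x$, the feasible region is unchanged, the optimal value of $\min_x\{\bar h(x):f_k(x)\le0,\ g_k(x)\le0\}$ equals $z^*-h(\hat x)$, and $\mathcal G(\hat x)=h(\hat x)-z^*$ is exactly the negative of that optimal value.

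First I would verify that Assumption~\ref{assumptions constrained}, with the stated modification to Condition~3, holds for the problem obtained by replacing $H,h$ with $\bar H,\bar h$ (and $\epsilon$ with $2\epsilon$), via the same five-step check as in the proof of Theorem~\ref{EL thm optimality gap}. Conditions~1, 4 and 5 are immediate from $\nabla_x\bar H=\nabla_x H$, from $\frac1n\sum_i\bar H(x;\xi_i)=\frac1n\sum_i H(x;\xi_i)-\frac1n\sum_i H(\hat x;\xi_i)\to\bar h(x)$ uniformly over $x\in\Lambda$ a.s., and from $E[\sup_{x\in\Lambda}\bar H(x;\xi)^2]\le 4(E[\sup_{x\in\Lambda}H(x;\xi)^2]+E[H(\hat x;\xi)^2])<\infty$, together with the fact that the $F_k$'s and $g_k$'s are unchanged. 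For Condition~2, the Lagrangian of the $\bar h$-problem differs from that of the original problem only by the constant $h(\hat x)$ (resp. $\tilde h(\hat x)$), so all gradients, the KKT active sets $\mathcal A_1^*,\mathcal A_2^*,\mathcal A_3^*$, and their necessity/sufficiency for $S^*$ are unchanged; distributional stability transfers because $\bar{\tilde h}(x)=\tilde h(x)-\tilde h(\hat x)$ satisfies $\sup_{x\in\Lambda}|\bar{\tilde h}(x)-\bar h(x)|\le\sup_{x\in\Lambda}|\tilde h(x)-h(x)|+|\tilde h(\hat x)-h(\hat x)|<2\epsilon$, while the $\tilde f_k$ perturbation bounds are inherited verbatim. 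Condition~3 becomes precisely the modification in the statement, since with the same multipliers $\lambda^*$ the variable $\frac{\partial}{\partial x_j}\bar H(x^*;\xi)+\sum_k\lambda_k^*\frac{\partial}{\partial x_j}F_k(x^*;\xi)$ coincides with the one for $H$, and the objective variable becomes $H(x^*;\xi)-H(\hat x;\xi)$.

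Next I would apply the optimal-value theorem of Section~\ref{sec:constrained} to the $\bar h$-problem, obtaining $\liminf_{n\to\infty}P(\bar z^*\in[\underline v,\overline v])\ge1-\beta$, where $\bar z^*=z^*-h(\hat x)$ and $\underline v,\overline v$ are the minimum and maximum, over $w$ in the $\chi^2_{p+m+1,\beta}$-divergence ball, of the inner program $\min_x\{\sum_i w_i\bar H(x;\xi_i):\sum_i w_iF_k(x;\xi_i)\le0,\ g_k(x)\le0\}$. The key algebraic step is that $\sum_i w_i\bar H(x;\xi_i)=\sum_i w_iH(x;\xi_i)-\sum_i w_iH(\hat x;\xi_i)$, with the subtracted term independent of $x$, so $\min_x\sum_i w_i\bar H(x;\xi_i)=-\max_x\sum_i w_i[H(\hat x;\xi_i)-H(x;\xi_i)]$ over the same feasible set; applying $\max/\min$ over $w$ then flips signs, yielding $[\underline v,\overline v]=[-\overline z,-\underline z]$ where $\overline z,\underline z$ are the maximum and minimum values of \eqref{optimality gap constrained}. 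Hence $\bar z^*\in[\underline v,\overline v]$ if and only if $\mathcal G(\hat x)=-\bar z^*\in[\underline z,\overline z]$, and the conclusion follows.

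I expect the only real subtlety — rather than a genuine obstacle — to be the sign-and-orientation bookkeeping in this last step (the inner $\min_x\sum_i w_i\bar H$ turning into the inner $\max_x\sum_i w_i[H(\hat x)-H(x)]$ of \eqref{optimality gap constrained}, with the outer $\max/\min_w$ swapping correspondingly), and the confirmation that the active-set and distributional-stability parts of Condition~2 survive the constant shift of the objective; both are handled exactly as in the unconstrained optimality-gap proof.
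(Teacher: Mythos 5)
Your proposal is correct and follows essentially the same route as the paper, which simply states that the proof is verbatim that of Theorem~\ref{EL thm optimality gap} with the observation that the $F_k$ and $g_k$ operations are unaffected by replacing $H,h$ with $\bar H,\bar h$; you have merely written out in full the condition-by-condition verification and the sign-flip bookkeeping that the paper leaves implicit. No gaps.
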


\begin{proof}
The proof follows verbatim from that of Theorem \ref{EL thm optimality gap}, and noting that the operations involving $f_k$, $F_k$ and $g_k$ are unaffected by the substitution of $h$ and $H$ with $\bar h$ and $\bar H$.
\end{proof}

\section{Numerical Examples}\label{sec:numerics}
We test the presented method numerically on three examples. For proof of concept, the first example is a simple unconstrained quadratic optimization problem. Then we apply the proposed method to
two more examples, including the problem of estimating Conditional-Value-at-Risk (CVaR) and a stochastically constrained portfolio optimization problem. The latter examples strictly speaking do not satisfy our assumptions, since the first order condition or the KKT condition does not hold for their sample counterparts. However, given that the EL method does not rely on these conditions procedurally, we can still test its performance on these examples.

We compare EL with the CIs obtained from the CLT and the delta method (\cite{shapiro2014lectures}, Theorem 5.7). For deterministically constrained problems in the form \eqref{obj}, the ($1-\beta$) CI on the optimal value is given by
\begin{equation}\label{CLT CI}
\left[\hat z_n^*\pm z_{1-\beta/2}\frac{\hat\sigma(\hat x_n^*)}{\sqrt n}\right]
\end{equation}
where $z_{1-\beta/2}$ is the critical value of the standard normal distribution with confidence $1-\beta$, $\hat x^*_n$ is the empirical optimal solution obtained from \eqref{obj empirical}, $\hat z^*_n = (1/n)\sum_{i=1}^nH(\hat x^*;\xi_i)$ is the empirical optimal value, and $\hat\sigma(\hat x^*_n) = \sqrt{(1/(n-1))\sum_{i=1}^n(H(\hat x^*_n; \xi_i)-\hat z^*_n)^2}$ is the empirical standard deviation of $H(\hat x^*_n;\xi)$. Since $\hat z^*_n$ is a low biased estimator of $z^*$, the CI (\ref{CLT CI}) suffers from the under coverage issue. So we also compare with a 2-sample CLT (CLT2) method, as suggested by \cite{mak1999}, which uses first half of the data to compute the empirical optimal value and solution, and then uses the remaining half of the data to estimate the objective value fixed at the solution to generate an upper bound. The 2-sample CLT CI is given by
\begin{equation} \label{CLT2 CI}
\left[\hat z^*_{n/2} - z_{1-\beta/2}\frac{\hat\sigma(\hat x^*_{n/2})}{\sqrt {n/2}}, \bar{z}^*_{n/2} + z_{1-\beta/2}\frac{\bar\sigma(\hat x^*_{n/2})}{\sqrt {n/2}}\right]
\end{equation}
where $z^*_{n/2}, \hat x^*_{n/2}, \sigma(\hat x^*_{n/2})$ are computed as before using first half of the data $\{\xi_1,\ldots,\xi_{n/2}\}$, $\bar{z}^*_{n/2}=(2/n)\sum_{i=\frac{n}{2}+1}^nH(\hat x^*_{n/2};\xi_i)$ is the evaluation of $\hat x^*_{n/2}$ using the remaining half of the data, and $\bar\sigma(\hat x^*_{n/2})=\sqrt{(1/(n/2-1))\sum_{i=\frac{n}{2}+1}^n(H(\hat x^*_{n/2}; \xi_i)-\bar z^*_{n/2})^2}$ is the empirical standard deviation at $\hat{x}^*_{n/2}$. Note that $ \bar{z}^*_{n/2}$ is a high biased estimator of $z^*$, and thus the CI (\ref{CLT2 CI}) alleviates the under coverage issue; on the other hand,  the effective sample size is reduced by half, and thus the estimates are less accurate especially when the data size is small, which may in turn affect the coverage probability of the CI. 

Due to the limited data size, we use the single replication procedure (SRP) proposed in \cite{Bayraksan2006} to estimate CIs on the optimality gap. For a given solution $\hat{x}$ that is independent of the data, the SRP outputs a one-sided ($1-\beta$) CI on the optimality gap given by
$$
\left[0, \hat{\mathcal G}_n(\hat{x}) + z_{1-\beta}\frac{\tilde\sigma(\hat x_n^*)}{\sqrt n}\right],
$$
where as before $\hat x^*_n$ is the empirical optimal solution, $\hat{\mathcal G}_n(\hat{x})= (1/n)\sum_{i=1}^n\left(H(\hat x, \xi_i)-H(\hat x^*_n,\xi_i)\right)^2$, and $$\tilde \sigma^2(\hat x_n^*) = \frac{1}{n-1}\sum_{i=1}^n\left[(H(\hat x,\xi_i)-H(\hat x_n^*,\xi_i))-(\hat h(\hat x) - \hat z^*_n) \right]^2,$$ where $\hat z^*_n$ is the empirical optimal value and $\hat h(\hat x) = (1/n)\sum_{i=1}^nH(\hat x,\xi_i)$. In all the examples considered below, we set $\beta = 0.05$. Note that all the above discussion holds for deterministically constrained problems. Nonetheless, we also apply these methods in a stochastically constrained problem as a benchmark (which is heuristic since there has been no formal proof of their validity in this case).

Note that the EL method consists of solving a max-min and a min-min problem. Supposing that the original problem \eqref{obj} or \eqref{obj constrained} is convex, then the max-min program is convex. In our examples we use the built-in Matlab solvers. The min-min program, on the other hand, is more challenging because the outer optimization involves minimizing the concave function $\min_{x\in\Theta}\sum_{i=1}^nw_iH(x;\xi_i)$ over $w$. This is not a convex problem in general. However, fixing either $w$ or $x$, optimizing over the other variable becomes a convex problem. Thus one approach is to do alternating minimization, by iteratively minimizing $w$ and $x$ while fixing each others, until no improvement is observed. Such type of schemes has appeared in chance-constrained programming (e.g., \cite{chen2010cvar,zymler2013distributionally,jiang2012data}), and it appears to work well in our examples despite a lack of global convergence guarantee. 

\subsection{Quadratic Optimization}\label{sec:quad opt}
We consider a simple unconstrained problem of minimizing a quadratic function
\begin{equation}\label{quad}
\min_{x} E[(x-\xi)^2],
\end{equation}
where $\xi$ follows an unknown distribution $F^c$. It is easy to see that the optimal solution is $x^* = E[\xi]$ and the optimal value is $z^* = Var(\xi)$.  We set $F^c$ as a standard normal distribution, and thus $x^*=0$ and $z^*=1$.

Assuming we are given $n$ observations from the normal distribution, we implement the different methods to obtain $95\%$ confidence bounds for the optimal value of \eqref{quad}. We test on three cases where we randomly generate $n= 10, 50, 100$ data points from $F^c$. For each case, we repeat the experiment $100$ times, and note down the empirical coverage probability, mean upper and lower bounds, and the mean and standard deviation of the interval width for each method. The results are summarized in Table \ref{comparisons simple}.

\begin{table}[h]
\centering
 {\small

 \begin{tabular}{cc|ccccc}
 & &Coverage & Mean lower & Mean upper & Mean interval & Standard deviation
\\
& &probability & bound & bound & width & of interval width\\
\hline
$n=10$ & EL& $0.79$ &$0.32$ &$1.89$ &$1.57$ &$0.95$\\
       & CLT&$0.72$ &$0.19$ &$1.69$ &$1.50$ &$0.93$ \\
       & CLT2&$0.80$ &$0.01$ &$2.42$ &$2.41$ &$1.75$ \\ \hline
$n=50$ & EL&$0.97$ &$0.29$ &$1.96$ &$1.67$ &$1.07$\\
       & CLT&$0.84$ &$0.60$ &$1.31$ &$0.71$ &$0.19$ \\
       &CLT2&$0.87$ &$0.45$ &$1.53$ &$1.08$ &$0.51$ \\ \hline
$n=100$& EL&$0.99$ &$0.66$ &$1.43$ &$0.77$ &$0.27$\\
       & CLT&$0.84$ &$0.70$ &$1.22$ &$0.52$ &$0.10$ \\
       &CLT2&$0.88$ &$0.60$ &$1.33$ &$0.74$ &$0.29$ \\
\end{tabular}
}
\caption{Confidence intervals on optimal values of the quadratic optimization problem}
\label{comparisons simple}
\end{table}

To compare EL and SRP on optimality gap, we first generate a solution $\hat{x}$ and evaluate its true optimality gap using a large sample size ($10^8$). Then for each of the three cases $n=10, 50, 100$, we repeat the experiment  $100$ times for each method to obtain $95\%$ confidence bounds and estimate their empirical coverage probabilities. The results are summarized in Table \ref{tab:quad_optgap}, where the suboptimal solution $\hat x = 0.62$ and its corresponding optimality gap is $0.39$.

\begin{table}[h]
\centering
 {\small

 \begin{tabular}{cc|ccccc}
 & &Coverage & Mean lower & Mean upper & Mean interval & Standard deviation
\\
& &probability & bound & bound & width & of interval width\\
\hline
$n=10$ & EL& $0.95$ &$0.06$ &$1.95$ &$1.89$ &$0.94$\\
       & CLT-SRP&$0.86$ &$0$ &$1.16$ &$1.16$ &$0.72$ \\ \hline
$n=50$ & EL&$0.99$ &$0.09$ &$1.42$ &$1.32$ &$1.54$\\
       & CLT-SRP&$0.93$ &$0$ &$0.72$ &$0.72$ &$0.26$ \\ \hline
$n=100$& EL&$0.97$ &$0.13$ &$0.83$ &$0.70$ &$0.50$\\
       & CLT-SRP&$0.92$ &$0$ &$0.57$ &$0.57$ &$0.16$ \\
\end{tabular}
}
\caption{Confidence intervals on optimality gaps of the quadratic optimization problem}
\label{tab:quad_optgap}
\end{table}


\subsection{CVaR Estimation}
In this example, we consider estimating CVaR$_{\alpha, F^c}(\xi)$, the $\alpha$-level conditional-value-at-risk of a random variable $\xi$, which we assume follows an unknown distribution $F^c$. This can be rewritten as a stochastic optimization problem:
\begin{equation}
\min_{x\in\mathbb{R}} \left\{x + \frac{1}{1-\alpha}E[(\xi-x)^+]\right\},\label{CVaR opt}
\end{equation}
where $(\cdot)^+$ is short for $\max(\cdot,0)$. We set $F^c$ as a standard normal distribution and $\alpha=0.9$. As the previous example in Section \ref{sec:quad opt}, we run the experiment $100$ times for each method and each case of $n=10, 50, 100$. The results are summarized in Table \ref{tab:CVaR} and \ref{tab:CVaR_optgap}. Note that the true optimal value can be accurately calculated and is equal to $1.755$; the suboptimal solution in this experiment is $0.71$ with optimality gap $0.36$.

\begin{table}[h]
\centering
 {\small

 \begin{tabular}{cc|ccccc}
 & &Coverage & Mean lower & Mean upper & Mean interval & Standard deviation
\\
& &probability & bound & bound & width & of interval width\\
\hline
$n=10$ & EL&$0.39$ &$0.80$&$1.65$&$0.85$&$0.56$\\
       & CLT&$0.50$&$0.95$&$2.16$&$1.21$&$1.02$ \\
       & CLT2&$0.47$&$1.14$&$4.03$&$2.89$&$4.17$ \\ \hline
$n=50$ & EL&$0.90$ &$1.21$&$2.31$&$1.10$&$0.40$\\
       & CLT&$0.81$&$1.23$&$2.16$&$0.93$&$0.35$ \\
       &CLT2&$0.78$&$0.99$&$2.55$&$1.56$&$1.22$ \\ \hline
$n=100$& EL&$0.98$ &$1.34$&$2.28$&$0.94$&$0.27$\\
       & CLT&$0.86$&$1.35$&$2.06$&$0.71$&$0.20$ \\
       &CLT2&$0.88$&$1.19$&$2.35$&$1.15$&$0.63$ \\
\end{tabular}
}
\caption{Confidence intervals on optimal values of the CVaR estimation problem}
\label{tab:CVaR}
\end{table}

\begin{table}[h]
\centering
 {\small

 \begin{tabular}{cc|ccccc}
 & &Coverage & Mean lower & Mean upper & Mean interval & Standard deviation
\\
& &probability & bound & bound & width & of interval width\\
\hline
$n=10$ & EL& $0.96$ &$0.02$ &$3.49$ &$3.47$ &$2.08$\\
       & CLT-SRP&$0.83$ &$0$ &$1.70$ &$1.70$ &$1.38$ \\ \hline
$n=50$ & EL&$1.00$ &$0.03$ &$1.46$ &$1.43$ &$0.50$\\
       & CLT-SRP&$0.85$ &$0$ &$0.81$ &$0.81$ &$0.43$\\ \hline
$n=100$& EL&$0.99$ &$0.07$ &$1.05$ &$0.98$ &$0.24$\\
       & CLT-SRP&$0.91$ &$0$ &$0.71$ &$0.71$ &$0.26$ \\
\end{tabular}
}
\caption{Confidence intervals on optimality gaps of the CVaR estimation problem}
\label{tab:CVaR_optgap}
\end{table}


\subsection{Portfolio Optimization}
Our last example considers minimizing the CVaR risk associated with the loss of an investment, subject to the condition that the expected return should exceed a certain threshold. Let's denote by $x = [x^1,\ldots,x^d]'$ the vector of holding proportions in $d$ assets, $\xi=[\xi^1,\ldots,\xi^d]'$ the random vector of asset returns, and $r_b$ the threshold for expected return.  We assume short selling is not allowed. The problem can be written as
\begin{equation} \label{CVaRmean}
\begin{array}{ll}
  \min_{x} &  CVaR_{\alpha}(-\xi'x)  \\
  \text{subject to} & E[\xi'x] \geq r_b \\
   & \sum_{i=1}^dx_i = 1 \\
   & x_i\geq 0, i=1,\ldots,d \\
\end{array}
\end{equation}
We can rewrite the problem in the form of (\ref{obj constrained}) as
\begin{equation} \label{CVaRmean}
\begin{array}{ll}
  \min_{x,c} & c+ \frac{1}{1-\alpha}E[(-\xi'x-c)^+]  \\
  \text{subject to} & E[\xi'x] \geq r_b \\
   & \sum_{i=1}^dx_i = 1 \\
   & x_i\geq 0, i=1,\ldots,d \\
\end{array}
\end{equation}

The parameter setting is as follows: $\xi$ follows a normal distribution with mean $\mu = [0.8, 1.2]'$ and covariance $\Sigma = [1~ 0; 0~ 4]$; the minimum expected return is $r_b = 1$; the CVaR level is $\alpha = 0.9$, and the confidence level is $1-\beta = 0.95$. It is easy to verify that the optimal solution to (\ref{CVaRmean}) is $x^* = [0.5, 0.5]'$, and the associate optimal value can be evaluated by Monte Carlo simulation with a large number ($10^8$) of samples, which yields $z^*\approx 0.96$. For comparison, we also implement the CLT and 2-sample CLT methods by computing the CIs according to (\ref{CLT CI}) or (\ref{CLT2 CI}); though the validity of these schemes has not been proved, we use them as heuristic to provide a benchmark. For each case of $n=10, 50, 100$, we repeat the experiment 100 times, and summarize the numerical results in Table \ref{tab:portfolio} and \ref{tab:portfolio_optgap}. In this experiment, the suboptimal solution is $[0.21 0.79]$ with optimality gap $0.73$.
\begin{table}[h]
\centering
 {\small


 \begin{tabular}{cc|ccccc}
 & &Coverage & Mean lower & Mean upper & Mean interval & Standard deviation
\\
& &probability & bound & bound & width & of interval width\\
\hline
$n=10$ & EL & $0.26$ & $-0.01$ & $1.04$ & $1.05$ & $4.66$\\
       & CLT& $0.21$ & $0.33$ & $1.21$ & $0.88$ & $1.61$ \\
       &CLT2& $0.52$ & $0.32$ & $5.46$ & $5.14$ & $6.90$ \\ \hline
$n=50$ & EL &$0.69$  & $0.14$ & $1.20$ & $1.06$ & $0.49$ \\
       & CLT& $0.58$ & $0.60$ & $1.85$ & $1.24$ & $0.64$ \\
       &CLT2& $0.67$ & $0.34$ & $2.45$ & $2.12$ & $1.59$\\ \hline
$n=100$& EL & $0.74$ & $0.28$ & $1.38$ & $1.09$ & $0.60$\\
       &CLT & $0.59$ & $0.73$ & $1.63$ & $0.90$ & $0.34$ \\
       &CLT2& $0.64$ & $0.63$ & $2.02$ & $1.39$ & $0.80$ \\
\end{tabular}
}
\caption{Confidence intervals on optimal values of the portfolio optimization problem}
\label{tab:portfolio}
\end{table}

\begin{table}[h]
\centering
 {\small

 \begin{tabular}{cc|ccccc}
 & &Coverage & Mean lower & Mean upper & Mean interval & Standard deviation
\\
& &probability & bound & bound & width & of interval width\\
\hline
$n=10$ & EL& $0.56$ &$0.57$ &$2.44$ &$1.87$ &$1.58$\\
       & CLT-SRP& $0.79$ &$0$ &$1.92$ &$1.92$ &$1.56$ \\ \hline
$n=50$ & EL&$0.90$ &$0.33$ &$1.91$ &$1.59$ &$1.60$\\
       & CLT-SRP&$0.57$ &$0$ &$0.92$ &$0.92$ &$0.82$\\ \hline
$n=100$& EL&$0.92$ &$0.48$ &$1.50$ &$1.02$ &$0.50$\\
       & CLT-SRP&$0.59$ &$0$ &$0.87$ &$0.87$ &$0.61$ \\
\end{tabular}
}
\caption{Confidence intervals on optimality gaps of the portfolio optimization problem}
\label{tab:portfolio_optgap}
\end{table}

\subsection{Summary of Numerical Results}

We note in all three examples EL in general has the highest coverage probability on optimal values except when the data size is very small ($n=10$). Although EL has wider intervals than the direct CLT method, its interval widths are often comparable to or smaller than the 2-sample CLT method, which usually has higher coverage probability than the plain CLT method. EL also has higher coverage probabilities on the optimal gap than SRP, accompanied by wider intervals than SRP. Overall speaking, EL performs competitively compared to the CLT methods.

One thing worth mentioning is that the empirical converge probability in the last example is in general smaller compared to the previous two examples. A potential reason (in addition to the assumptions or the validity of the compared methods not being rigorously justified) is the higher dimensionality that naturally requires more data to achieve a similar level of accuracy in the SAA solution. 
Nevertheless, we can see that the EL method still produces CIs with a higher coverage probability than CLT methods when the data size is not too small, and the coverage probability improves as the data size increases.


\section{Conclusion}
We have studied the EL method to construct statistically valid CIs for the optimal value and the optimality gap of a given solution for stochastic optimization problems. The method builds on positing two optimization problems that resemble DRO problems with Burg-entropy divergence ball constraints, with the ball size suitably calibrated by a $\chi^2$-quantile with a suitable degree of freedom. We have studied the theory leading to the statistical guarantees and numerically compared our method to approaches suggested by the CLT. Built on a rigorous foundation, our method provides a competitive method for evaluating the statistical uncertainty for stochastic optimization problems under limited data. In future work, we plan to further refine the accuracy of our method.

\section*{Appendix}
\begin{lemma}[Lemma 11.2 in \cite{owen2001empirical}]
Let $Y_i$ be i.i.d. random variables in $\mathbb R$ with $EY_i^2<\infty$. We have $\max_{1\leq i\leq n}|Y_i|=o(n^{1/2})$ a.s..
\end{lemma}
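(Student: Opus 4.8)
The plan is to prove the lemma by a direct application of the first Borel--Cantelli lemma, using only that $EY_1^2<\infty$ makes the tail series $\sum_n P(|Y_n|>\epsilon\sqrt n)$ summable for each fixed $\epsilon>0$; no independence is needed.

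First I would fix $\epsilon>0$ and set $A_n=\{|Y_n|>\epsilon\sqrt n\}$. Since the $Y_i$ are identically distributed, $P(A_n)=P(Y_1^2>\epsilon^2 n)$. The elementary comparison $E[Z]=\int_0^\infty P(Z>t)\,dt\ge c\sum_{n\ge1}P(Z>cn)$, valid for any nonnegative random variable $Z$ and $c>0$ because $t\mapsto P(Z>t)$ is non-increasing, applied with $Z=Y_1^2$ and $c=\epsilon^2$ gives $\sum_{n\ge1}P(A_n)\le E[Y_1^2]/\epsilon^2<\infty$. Hence by the first Borel--Cantelli lemma $P(A_n\text{ i.o.})=0$: almost surely there is a finite random index $N$ with $|Y_n|\le\epsilon\sqrt n$ for all $n\ge N$.

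Next I would pass from $|Y_n|$ to the running maximum. On the above almost-sure event, for $n\ge N$ write $\max_{1\le i\le n}|Y_i|=\max\{M,\ \max_{N\le i\le n}|Y_i|\}\le\max\{M,\ \epsilon\sqrt n\}$ with $M=\max_{1\le i<N}|Y_i|$ a finite random constant; dividing by $\sqrt n$ and letting $n\to\infty$ yields $\limsup_n \max_{1\le i\le n}|Y_i|/\sqrt n\le\epsilon$. Finally, intersecting these almost-sure events over the sequence $\epsilon=1/m$, $m\in\mathbb N$ (a countable intersection, hence still almost sure), forces $\limsup_n \max_{1\le i\le n}|Y_i|/\sqrt n=0$ almost surely, which is exactly $\max_{1\le i\le n}|Y_i|=o(n^{1/2})$ a.s.

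For a statement of this kind there is no real obstacle; the only point requiring care is the order of quantifiers --- fixing $\epsilon$ before invoking Borel--Cantelli and only afterward taking the countable intersection over $\epsilon_m\downarrow0$ --- together with the routine verification of the tail-sum inequality $\sum_{n\ge1}P(Z>cn)\le E[Z]/c$.
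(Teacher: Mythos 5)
Your argument is correct: the tail-sum bound $\sum_{n\ge1}P(Y_1^2>\epsilon^2 n)\le E[Y_1^2]/\epsilon^2$, the first Borel--Cantelli lemma, and the countable intersection over $\epsilon=1/m$ are all handled properly, and you are right that only identical distribution (not independence) is used. The paper itself gives no proof --- it imports this as Lemma 11.2 of Owen's \emph{Empirical Likelihood} --- and your argument is essentially the standard proof given there, so there is nothing further to compare.
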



\bibliographystyle{abbrv}
\bibliography{bibliography2}

\end{document}